\newtheorem{lemma}{Lemma}[section]
\newtheorem{theorem}{Theorem}[section]
\newtheorem{remark}{Remark}[section]
\begin{document}
\begin{center}
\textbf{\LARGE{Seven Means, Generalized Triangular Discrimination,\\
 and Generating Divergence Measures} }
\end{center}

\bigskip
\begin{center}
\textbf{\large{Inder J. Taneja}}\\
Departamento de Matem\'{a}tica\\
Universidade Federal de Santa Catarina\\
88.040-900 Florian\'{o}polis, SC, Brazil.\\
\textit{e-mail: ijtaneja@gmail.com\\
http://www.mtm.ufsc.br/$\sim $taneja}
\end{center}

\begin{abstract}
  From geometrical point of view, Eve \cite{eve} studied seven means. These means are \textit{Harmonic, Geometric, Arithmetic, Heronian, Contra-harmonic, Root-mean square} and \textit{Centroidal mean}. We have considered for the first time a new measure calling \textit{generalized triangular discrimination}. Inequalities among non-negative differences arising due to seven means and particular cases of \textit{generalized triangular discrimination} are considered. Some new generating measures and their exponential representations are also presented.
\end{abstract}

\bigskip
\textbf{Key words:} \textit{Arithmetic mean, Geometric mean, Heronian mean, Hellingar's discrimination, triangular discrimination, Information inequalities.}

\bigskip
\textbf{AMS Classification:} 94A17; 26A48; 26D07.

\section{Seven Geometric Means}

Let $a,\,b>0$ be two positive numbers. Eves \cite{eve} studied the geometrical
interpretation of the following seven means:

\begin{enumerate}
\item Arithmetic mean: \quad $A(a,b)={\left( {a+b} \right)} \mathord{\left/ {\vphantom {{\left( {a+b} \right)} 2}} \right. \kern-\nulldelimiterspace} 2$;
\item Geometric mean: \quad $G(a,b)=\sqrt {ab} $;
\item Harmonic mean: \quad $H(a,b)={2ab} \mathord{\left/ {\vphantom {{2ab} {\left( {a+b} \right)}}} \right. \kern-\nulldelimiterspace} {\left( {a+b} \right)}$;
\item Heronian mean:\quad $N(a,b)={\left( {a+\sqrt {ab} +b} \right)} \mathord{\left/ {\vphantom {{\left( {a+\sqrt {ab} +b} \right)} 3}} \right. \kern-\nulldelimiterspace} 3$;
\item Contra-harmonic mean: \quad $C(a,b)={\left( {a^{2}+b^{2}} \right)} \mathord{\left/ {\vphantom {{\left( {a^{2}+b^{2}} \right)} {\left( {a+b} \right)}}} \right. \kern-\nulldelimiterspace} {\left( {a+b} \right)}$;
\item Root-mean-square: \quad $S(a,b)=\sqrt {{\left( {a^{2}+b^{2}} \right)} \mathord{\left/ {\vphantom {{\left( {a^{2}+b^{2}} \right)} 2}} \right. \kern-\nulldelimiterspace} 2} $;
\item Centroidal mean: \quad $R(a,b)={2\left( {a^{2}+ab+b^{2}} \right)} \mathord{\left/ {\vphantom {{2\left( {a^{2}+ab+b^{2}} \right)} {3\left( {a+b} \right)}}} \right. \kern-\nulldelimiterspace} {3\left( {a+b} \right)}$.
\end{enumerate}

We can easily verify the following inequality having the above seven means:
\begin{equation}
\label{eq1}
H\le G\le N\le A\le R\le S\le C.
\end{equation}

Let us write, $M(a,b)=b\,f_{M} (a/b)$, where $M$ stands for any of the above
seven means, then we have
\begin{equation}
\label{eq2}
f_{H} (x)\le f_{G} (x)
\le f_{N} (x))\le f_{A} (x)\le f_{R} (x)\le f_{S} (x)\le f_{C} (x).
\end{equation}

\smallskip
\noindent where $f_{H} (x)={2x} \mathord{\left/ {\vphantom {{2x} {\left( {x+1}
\right)}}} \right. \kern-\nulldelimiterspace} {\left( {x+1} \right)}$,
$f_{G} (x)=\sqrt x $, $f_{N} (x)={\left( {x+\sqrt x +1} \right)}
\mathord{\left/ {\vphantom {{\left( {x+\sqrt x +1} \right)} 3}} \right.
\kern-\nulldelimiterspace} 3$, $f_{A} (x)={\left( {x+1} \right)}
\mathord{\left/ {\vphantom {{\left( {x+1} \right)} 2}} \right.
\kern-\nulldelimiterspace} 2$, $f_{R} (x)={2\left( {x^{2}+x+1} \right)}
\mathord{\left/ {\vphantom {{2\left( {x^{2}+x+1} \right)} {3\left( {x+1}
\right)}}} \right. \kern-\nulldelimiterspace} {3\left( {x+1} \right)}$,
$f_{S} (x)=\sqrt {{\left( {x^{2}+1} \right)} \mathord{\left/ {\vphantom
{{\left( {x^{2}+1} \right)} 2}} \right. \kern-\nulldelimiterspace} 2} $ \,and
 $f_{C} (x)={\left( {x^{2}+1} \right)} \mathord{\left/ {\vphantom {{\left(
{x^{2}+1} \right)} {\left( {x+1} \right)}}} \right.
\kern-\nulldelimiterspace} {\left( {x+1} \right)}$, $\forall x>0$, $x\ne 1$.
In all these cases, we have equality sign iff $x=1$, i.e., $f_{(\cdot )}
(1)=1$.

\subsection{Inequalities among Differences of Means}

For simplicity, let us write
\begin{equation}
\label{eq3}
D_{AB} =b\,f_{AB} (a,b),
\end{equation}

\smallskip
\noindent
where $f_{UV} (x)=f_{U} (x)-f_{V} (x)$, with $U\ge V$. Thus, according to
(\ref{eq3}), the inequality (\ref{eq1}) admits 21 non-negative differences. These
differences satisfy some simple inequalities given by the following pyramid:
\[
D_{GH};
\]
\[
D_{NG} \le D_{NH} ;
\]
\[
D_{AN} \le D_{AG} \le D_{AH} ;
\]
\[
D_{RA} \le D_{RN} \le D_{RG} \le D_{RH} ;
\]
\[
D_{SR} \le D_{SA} \le D_{SN} \le D_{SG} \le D_{SH} ;
\]
\[
D_{CS} \le D_{CR} \le D_{CA} \le D_{CN} \le D_{CG} \le D_{CH},
\]

\bigskip
\noindent where, for example, $D_{GH} :=G-H$, $D_{NG} :=N-G$, etc. After
simplifications, we have the following equalities among some of these
measures:

\begin{enumerate}
\item $3D_{CR} =2 D_{AH} =2D_{CA} =D_{CH} =6D_{RA} =\textstyle{3 \over 2}D_{RH} :=\Delta ;$
\item $3D_{AN} =D_{AG} = \textstyle{3 \over 2} D_{NG} :=h;$
\item $D_{CG} =3 D_{RN}.$
\end{enumerate}

The measures $\Delta (a,b)$ and $h(a,b)$ are the well know \textit{triangular} and \textit{Hellingar's discriminations} \cite{hel} given by $\Delta (a,b)={(a-b)^{2}} \mathord{\left/ {\vphantom {{(a-b)^{2}} {(a-b)}}}
\right. \kern-\nulldelimiterspace} {(a-b)}$ and $h(a,b)=\textstyle{1 \over
2}\left( {\sqrt a -\sqrt b } \right)^{2}$ respectively. Not all the measures
appearing in the above pyramid are convex in the the pair $(a,b)\in
{\rm R}_{+}^{2} $. Recently, the author \cite{tan6} proved the
following theorem for the convex measures.

\begin{theorem} The following inequalities hold:
\begin{equation}
\label{eq4}
 D_{SA} \le \left\{ {\begin{array}{l}
 \textstyle{3 \over 4}D_{SN} \\
 \textstyle{1 \over 3}D_{SH} \le \textstyle{1 \over 4}\Delta \\
 \end{array}} \right\}\le \left\{ {\begin{array}{l}
 \textstyle{3 \over 7}D_{CN} \le \left\{ {\begin{array}{l}
 D_{CS} \\
 \textstyle{1 \over 3}D_{CG} \le \textstyle{3 \over 5}D_{RG} \\
 \end{array}} \right. \\
 \textstyle{1 \over 2}D_{SG} \le \textstyle{3 \over 5}D_{RG} \\
 \end{array}} \right\} \le h.
 \end{equation}
\end{theorem}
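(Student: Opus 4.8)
\medskip
\noindent\textbf{Proof proposal.}\quad The plan is to collapse the two variables to one and then attack the resulting one‑variable inequalities with two complementary tools. Writing $x=a/b$ and $D_{UV}(a,b)=b\bigl(f_U(x)-f_V(x)\bigr)=:b\,f_{UV}(x)$, each pairwise inequality in (\ref{eq4}), say $c_1D_P\le c_2D_Q$ with $c_1,c_2>0$, is equivalent to $c_1f_P(x)\le c_2f_Q(x)$ for all $x>0$ (and likewise for the entries $\tfrac14\Delta$ and $h$, whose one‑variable forms are $\tfrac14(x-1)^2/(x+1)$ and $\tfrac12(\sqrt x-1)^2$). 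Every $f_{UV}$ here is $C^\infty$ on $(0,\infty)$, is nonnegative with its only zero an interior one at $x=1$, hence $f_{UV}(1)=f_{UV}'(1)=0$, and is convex by \cite{tan6}. Two normal forms will be used: the explicit second derivatives $f_A''=0$, $f_C''=4(x+1)^{-3}$, $f_R''=\tfrac43(x+1)^{-3}$, $f_H''=-4(x+1)^{-3}$, $f_S''=\bigl(\sqrt2\,(x^2+1)^{3/2}\bigr)^{-1}$, $f_G''=-\tfrac14x^{-3/2}$, $f_N''=-\tfrac1{12}x^{-3/2}$, $f_\Delta''=8(x+1)^{-3}$, $f_h''=\tfrac14x^{-3/2}$; and the closed forms obtained by clearing surds, in which each difference is $(\sqrt a-\sqrt b)^2$ or $(a-b)^2$ times a smooth strictly positive function, e.g.\ $A-G=\tfrac12(\sqrt a-\sqrt b)^2$, $S-G=(a-b)^2/\bigl(2(S+G)\bigr)$, $C-N=(\sqrt a-\sqrt b)^2(2a+3\sqrt{ab}+2b)/\bigl(3(a+b)\bigr)$, so that the $x=1$ equality case is automatic.

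The first tool is the standard second‑derivative comparison: if $g_1,g_2$ are twice differentiable on $(0,\infty)$ with $g_i(1)=g_i'(1)=0$, $g_2''\ge0$, and $g_1''(x)\le\beta\,g_2''(x)$ for all $x>0$, then $g_1\le\beta\,g_2$ on $(0,\infty)$ (integrate this inequality twice from $1$; for $x<1$ the two nested integrations each reverse orientation, so the inequality survives). This disposes of every pair in which the ratio $f_P''/f_Q''$ is maximised at $x=1$ with maximum equal to the stated constant, which happens whenever the two second derivatives share a dominant term or the ratio reduces to a monotone function of $x+x^{-1}\ge2$: for instance $f_{SA}''/f_{SN}''=\bigl(1+\tfrac{\sqrt2}{12}(x+x^{-1})^{3/2}\bigr)^{-1}\le\tfrac34$; $f_{SH}''/f_\Delta''=\tfrac12+\tfrac1{8\sqrt2}\bigl((x+1)^2/(x^2+1)\bigr)^{3/2}\le\tfrac34$ by $(x+1)^2\le2(x^2+1)$; $f_{CN}''/f_{CG}''=1-\tfrac{2/3}{1+16x^{3/2}(x+1)^{-3}}\le\tfrac79$; and similarly $f_{SN}''/f_{SG}''\le\tfrac23$, $f_{CG}''/f_{RG}''\le\tfrac95$, $f_\Delta''/f_{CN}''\le\tfrac{12}7$, $f_{RG}''/f_h''\le\tfrac53$. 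These settle the comparisons $D_{SA}\le\tfrac34D_{SN}$, $D_{SA}\le\tfrac13D_{SH}$, $\tfrac13D_{SH}\le\tfrac14\Delta$, $\tfrac34D_{SN}\le\tfrac12D_{SG}$, $\tfrac37D_{CN}\le\tfrac13D_{CG}$, $\tfrac13D_{CG}\le\tfrac35D_{RG}$, $\tfrac14\Delta\le\tfrac37D_{CN}$ and $\tfrac35D_{RG}\le h$; two more come out of the closed forms, namely $\tfrac14\Delta\le\tfrac12D_{SG}$ (equivalent to $S+G\le a+b$, i.e.\ to $D_{SA}\le h$, i.e.\ to $2ab\le a^2+b^2$) and $D_{CS}\le h$ (equivalent to $(a+b)^4-8ab(a^2+b^2)=(a-b)^4\ge0$).

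The genuinely resistant cases are the few comparisons between an $S$‑difference and a difference of ``$(x+1)$‑type'' means, where $f_P''/f_Q''\to1$ (or is even unbounded) as $x\to\infty$ while the target constant is strictly smaller: $\tfrac34D_{SN}\le\tfrac37D_{CN}$ together with $\tfrac37D_{CN}\le D_{CS}$ — both equivalent, after cancelling $N$ and then $C$, to the single inequality $7S\le4C+3N$ — and $\tfrac12D_{SG}\le\tfrac35D_{RG}$, equivalent to $5S+G\le6R$. For each I would cancel the common means to reach such a linear inequality among the means themselves, normalise $b=1$, $a=t^2$, isolate the unique surviving radical (a power of $\sqrt{t^4+1}$) on one side, square, and clear denominators; the result is a polynomial inequality in $t$ of degree at most $8$ that holds with equality at $t=1$, so dividing out the forced factor $(t-1)^2$ leaves a polynomial of degree at most $6$ to be certified nonnegative on $(0,\infty)$.

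That last certification is the only real obstacle: these degree‑six cofactors need not have all coefficients nonnegative, so for each surd‑pair one must regroup the cofactor as a manifestly nonnegative combination (typically of terms $t^k$ and $t^k(t-1)^2$) or run a short sign analysis of it and its derivative. Nothing here is conceptually new, but it must be executed carefully case by case; everything else — the one‑variable reduction, the second‑derivative lemma, and the cancellations leading to $7S\le4C+3N$ and $5S+G\le6R$ — is routine, and the equalities $\Delta=2D_{CA}=3D_{CR}$, $D_{AG}=h$, $3D_{AN}=D_{AG}$ together with the elementary pyramid preceding the theorem absorb the transitive consequences among the remaining comparisons.
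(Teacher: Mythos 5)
Your plan is essentially correct, but it is worth noting that this paper does not actually prove Theorem 1.1: it quotes the result from \cite{tan6} and records only the machinery the proof rests on, namely Lemma 1.1 (convexity of $b\,f(a/b)$) and Lemma 1.2 (the bound $\varphi_{f_1}\le\beta\,\varphi_{f_2}$ with $\beta=\sup f_1''/f_2''$, attained at $x=1$). Your ``first tool'' is exactly Lemma 1.2, re-derived by double integration, and your checks that the relevant ratios are maximised at $x=1$ (via $x+x^{-1}\ge2$, $x^{3/2}(x+1)^{-3}\le\tfrac18$, $(x+1)^2\le2(x^2+1)$) are all correct. What you add --- and what the paper's stated method genuinely cannot deliver --- is the treatment of the comparisons mixing $S$ with the rational means: for $D_{SN}\le\tfrac47D_{CN}$ the ratio $f''_{SN}/f''_{CN}$ tends to $1>\tfrac47$ as $x\to\infty$ (the term $-f_N''=\tfrac1{12}x^{-3/2}$ dominates both numerator and denominator), and $f''_{CN}/f''_{CS}$ is unbounded, so Lemma 1.2 with $\beta=g(1)$ is inapplicable there. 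Your fallback does work: cancelling the common mean gives $7S\le4C+3N$ and $5S+G\le6R$, and after setting $a=t^2$, $b=1$, isolating the radical and squaring, the two degree-8 polynomials factor as $(t-1)^4\,(t^4+24t^3+34t^2+24t+1)$ and $(t-1)^4\,(7t^4+12t^3+22t^2+12t+7)$ respectively, so the degree-6 cofactors you were worried about vanish to second order at $t=1$ and the remaining quartics have positive coefficients --- no further sign analysis is needed. This is in spirit the ``2nd approach'' the author later uses in Theorem 2.1, exhibiting $\beta D_Q-D_P$ as $b f_V(a/b)$ with $f_V$ manifestly nonnegative. One small inaccuracy, not fatal: $\tfrac14\Delta\le\tfrac12D_{SG}$ reduces to $S+G\le 2A$, which is $(S+G)^2\le 2(S^2+G^2)=(a+b)^2$, i.e.\ $2SG\le S^2+G^2$, not literally $2ab\le a^2+b^2$; similarly your $D_{CS}\le h$ reduction to $(a+b)^4-8ab(a^2+b^2)=(a-b)^4\ge0$ is correct but deserves the intermediate step $D_{CS}\le h\iff GS\le A^2$.
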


The proof of the above theorem is based on the following two lemmas \cite{tan2,tan4}.

\begin{lemma} Let $f:I\subset {\rm R}_{+} \to {\rm R}$ be a convex and
differentiable function satisfying $f(1)=0$. Consider a function
\[
\varphi_{f} (a,b)=af\left( {\frac{b}{a}} \right),
\quad
a,b>0,
\]
then the function $\varphi_{f} (a,b)$ is convex in ${\rm R}_{+}^{2} $.
Additionally, if $f^{\prime }(1)=0$, then the following inequality hold:
\[
0\le \varphi_{f} (a,b)\le \left( {\frac{b-a}{a}} \right)\varphi_{{f}'}
(a,b).
\]
\end{lemma}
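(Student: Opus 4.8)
The plan is to prove the two assertions of the lemma separately. The first — joint convexity of $\varphi_f(a,b)=af(b/a)$ on ${\rm R}_{+}^{2}$ — is the classical fact that the perspective of a convex function is convex; the second is an elementary consequence of the tangent-line characterization of convexity, applied at the point $1$.

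For the convexity of $\varphi_f$ I would argue straight from the definition, so that nothing beyond convexity of $f$ is used. Fix $(a_1,b_1),(a_2,b_2)\in{\rm R}_{+}^{2}$ and $t\in[0,1]$, and set $a=ta_1+(1-t)a_2$, $b=tb_1+(1-t)b_2$, $\lambda=ta_1/a\in[0,1]$, so that $1-\lambda=(1-t)a_2/a$. The one computation that matters is the identity $b/a=\lambda(b_1/a_1)+(1-\lambda)(b_2/a_2)$, valid by construction; inserting it into the convexity inequality for $f$ and multiplying through by $a>0$ gives $af(b/a)\le ta_1 f(b_1/a_1)+(1-t)a_2 f(b_2/a_2)$, i.e. $\varphi_f(a,b)\le t\,\varphi_f(a_1,b_1)+(1-t)\,\varphi_f(a_2,b_2)$. (Alternatively one may compute the Hessian of $\varphi_f$ directly and observe that it equals $\tfrac1a f''(b/a)$ times a positive semidefinite rank-one matrix, hence is positive semidefinite when $f''\ge 0$; but that route needs $f\in C^{2}$, which the direct argument avoids.)

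For the two-sided inequality I would use the fact that a differentiable convex $f$ lies above each of its tangents: $f(y)\ge f(x)+f'(x)(y-x)$ for all $x,y\in I$. Choosing $y=1$, $x=b/a$ and using $f(1)=0$ gives $f(b/a)\le (b/a-1)f'(b/a)$; multiplying by $a>0$ and recognising that $\tfrac{b-a}{a}\varphi_{f'}(a,b)=(b-a)f'(b/a)$ (since $\varphi_{f'}(a,b)=af'(b/a)$) produces the upper bound $\varphi_f(a,b)\le\tfrac{b-a}{a}\varphi_{f'}(a,b)$. For the lower bound I would instead choose $x=1$, $y=b/a$: the extra hypothesis $f'(1)=0$ makes the tangent at $1$ the constant function $f(1)=0$, so $f(b/a)\ge 0$ and hence $\varphi_f(a,b)=af(b/a)\ge 0$.

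I do not expect a genuine obstacle. The only points needing care are bookkeeping: choosing $\lambda$ so that the convex-combination identity for $b/a$ holds, and checking that $b/a,\,b_1/a_1,\,b_2/a_2$ all lie in $I$ — which they do, as $I$ is an interval in ${\rm R}_{+}$ and all of $a,a_1,a_2,b,b_1,b_2$ are positive. It is also worth noting explicitly that the factor $(b-a)/a$ in the statement is precisely what results from clearing $a$ out of the inequality $af(b/a)\le(b-a)f'(b/a)$.
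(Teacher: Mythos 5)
Your proof is correct. Note that the paper itself does not prove this lemma at all --- it is quoted from the references \cite{tan2,tan4} --- so there is no in-paper argument to compare against; your two steps (the direct convex-combination identity $b/a=\lambda(b_1/a_1)+(1-\lambda)(b_2/a_2)$ establishing convexity of the perspective function, and the tangent-line inequality at $x=b/a$ and at $x=1$ giving the upper and lower bounds respectively) are exactly the standard arguments used in those sources, and your bookkeeping (positivity of $a$, membership of $b/a$ in the interval $I$, and the observation that the upper bound needs only $f(1)=0$ while the lower bound uses $f'(1)=0$) is accurate.
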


\begin{lemma} Let $f_{1} ,f_{2} :I\subset {\rm R}_{+} \to {\rm R}$ be
two convex functions satisfying the assumptions:

(i) $f_{1} (1)=f_{1}^{\prime }(1)=0$, $f_{2} (1)=f_{2}^{\prime }(1)=0$;

(ii) $f_{1} $ and $f_{2} $ are twice differentiable in ${\rm R}_{+} $;

(iii) there exists the real constants $\alpha ,\beta $ such that $0\le
\alpha <\beta $ and
\[
\alpha \le \frac{f_{1}^{\prime \prime }(x)}{f_{2}^{\prime \prime }(x)}\le
\beta,
\quad
f_{2}^{\prime \prime }(x)>0,
\]
for all $x>0$ then we have the inequalities:
\[
\alpha \mbox{\thinspace }\varphi_{f_{2} } (a,b)\le \varphi_{f_{1} }
(a,b)\le \beta \mbox{\thinspace }\varphi_{f_{2} } (a,b),
\]
for all $a,b\in (0,\infty )$, where the function $\phi_{(\cdot )} (a,b)$ is
as defined in Lemma 1.2.
\end{lemma}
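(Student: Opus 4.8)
\noindent\emph{Proof proposal.}
The plan is to reduce the claimed two‑sided estimate to two applications of the non‑negativity half of Lemma~1.2 (the first lemma above), using the obvious linearity of the map $f\mapsto\varphi_{f}$. Concretely, for any functions $g,h$ on $(0,\infty)$ and any scalar $c$ we have $\varphi_{cg+h}(a,b)=c\,\varphi_{g}(a,b)+\varphi_{h}(a,b)$, which is immediate from $\varphi_{f}(a,b)=a\,f(b/a)$. So it suffices to exhibit, for each of the two inequalities, a single auxiliary convex function vanishing together with its derivative at $x=1$ and to feed it into Lemma~1.2.

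For the lower bound I would set $g:=f_{1}-\alpha f_{2}$. By (ii) the function $g$ is twice differentiable (in particular differentiable), and since $f_{2}''(x)>0$ and $f_{1}''(x)/f_{2}''(x)\ge\alpha$ on $(0,\infty)$ we get $g''(x)=f_{1}''(x)-\alpha f_{2}''(x)\ge 0$, so $g$ is convex; moreover $g(1)=f_{1}(1)-\alpha f_{2}(1)=0$ and $g'(1)=f_{1}'(1)-\alpha f_{2}'(1)=0$ by (i). Lemma~1.2 (its ``additionally'' clause, which requires exactly $g(1)=g'(1)=0$) then gives $\varphi_{g}(a,b)\ge 0$ for all $a,b>0$, and by linearity this is $\varphi_{f_{1}}(a,b)-\alpha\,\varphi_{f_{2}}(a,b)\ge 0$, i.e. $\alpha\,\varphi_{f_{2}}(a,b)\le\varphi_{f_{1}}(a,b)$. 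For the upper bound I would run the identical argument with $h:=\beta f_{2}-f_{1}$: now $h''(x)=\beta f_{2}''(x)-f_{1}''(x)\ge 0$ because $f_{1}''(x)/f_{2}''(x)\le\beta$ and $f_{2}''(x)>0$, while again $h(1)=h'(1)=0$ by (i); Lemma~1.2 yields $\varphi_{h}(a,b)\ge 0$, that is $\beta\,\varphi_{f_{2}}(a,b)-\varphi_{f_{1}}(a,b)\ge 0$, which is $\varphi_{f_{1}}(a,b)\le\beta\,\varphi_{f_{2}}(a,b)$. Concatenating the two inequalities gives the asserted chain $\alpha\,\varphi_{f_{2}}(a,b)\le\varphi_{f_{1}}(a,b)\le\beta\,\varphi_{f_{2}}(a,b)$.

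The argument is essentially mechanical, so I do not expect a genuine obstacle; the only places calling for a little care are (a) verifying that hypothesis (iii) is precisely what turns $g$ and $h$ into convex functions (it delivers $g''\ge 0$ and $h''\ge 0$ pointwise, using $f_{2}''>0$ to fix the sign of the quotient), and (b) checking that the hypotheses of Lemma~1.2 — convexity, differentiability, and the vanishing of the function and its first derivative at $1$ — are met by $g$ and $h$, which is exactly what (i)–(ii) provide. It is worth noting in passing that applying Lemma~1.2 to $f_{2}$ itself gives $\varphi_{f_{2}}(a,b)\ge 0$, so the resulting sandwich is consistent with the ordering $0\le\alpha<\beta$.
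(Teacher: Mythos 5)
Your proof is correct. Note that the paper does not actually prove this lemma: it is stated without proof and attributed to \cite{tan2,tan4}, so there is no internal argument to compare against. Your reduction --- applying the non-negativity part of the first lemma to the auxiliary functions $g=f_{1}-\alpha f_{2}$ and $h=\beta f_{2}-f_{1}$, which inherit $g(1)=g'(1)=0$ and $h(1)=h'(1)=0$ from hypothesis (i) and convexity from hypothesis (iii) together with $f_{2}''>0$, and then using linearity of $f\mapsto\varphi_{f}$ --- is exactly the standard argument in the cited sources, and every step checks out. If you want to make the proof self-contained you can even bypass the appeal to the first lemma: a differentiable convex function $g$ with $g'(1)=0$ attains its global minimum at $x=1$, where $g(1)=0$, so $g\ge 0$ on its domain and hence $\varphi_{g}(a,b)=a\,g(b/a)\ge 0$ for $a,b>0$ directly.
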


\subsection{Generalized Triangular Discrimination}

For all $a,\,b>0$, let consider the following measures
\begin{equation}
\label{eq5}
L_{t} (a,b)=\frac{\left( {a-b} \right)^{2}\left( {a+b}
\right)^{t}}{2^{t}\left( {\sqrt {ab} } \right)^{t+1}},\quad t\in {\rm Z}
\end{equation}
In particular, we have
\begin{align}
& L_{-1} (a,b)=2\,\Delta (a,b)\notag\\
& L_{0} (a,b)=K(a,b)=\frac{\left( {a-b} \right)^{2}}{\sqrt {ab} },\notag\\
& L_{1} (a,b)=\frac{1}{2}\Psi (a,b)=\frac{\left( {a-b} \right)^{2}\left( {a+b}
\right)}{2ab},\notag\\
& L_{2} (a,b)=\frac{1}{2}F(a,b)=\frac{\left( {a^{2}-b^{2}}
\right)^{2}}{4\left( {ab} \right)^{3/2}}\notag
\intertext{and}
& L_{3} (a,b)=\frac{1}{8}L(a,b)=\frac{\left( {a-b} \right)^{2}\left( {a+b}
\right)^{3}}{8\left( {ab} \right)^{2}}.\notag
\end{align}

From above, we observe that the expression (\ref{eq5}) contains some well-known
measures such as $K(a,b)$ is due to Jain and Srivastava \cite{jas},
$F(a,b)$ is due to Kumar and Johnson \cite{kuj}, $\Psi (a,b)$ is
\textbf{symmetric $\chi^{2}-$measure} \cite{tan2}. While, $L(a,b)$ is
considered here for the first time. More studied on related measures can be seen in
\cite{tan1, tan3, tan5, tak}.

\bigskip
\textbf{Convexity:} Let us prove now the convexity of the measure (\ref{eq5}). We
can write $L_{t} (a,b)=b{\kern 1pt}f(a/b)$, $t\in {\rm Z}$, where
\[
f_{L_{t} } (x)=\frac{\left( {x-1} \right)^{2}\left( {x+1}
\right)^{t}}{2^{t}\left( {\sqrt x } \right)^{t+1}}.
\]
The second order derivative of the function $f_{L_{t} } (x)$ is given by
\[
{f}''_{L_{t} } (x)=\frac{\left( {x+1} \right)^{t-2}}{2^{t+2}x^{2}\left(
{\sqrt x } \right)^{t+1}}\times A_{7} (x,t),
\]
where
\begin{align}
A_{7} (x,t)& =\left( {t+1} \right)\left( {t+3} \right)\left( {x^{4}+1}
\right)+\notag\\
\label{eq6}
& \hspace{10pt} +4x\left( {x^{2}+1} \right)\left( {2-t} \right)\left( {t+1}
\right)+2x^{2}\left( {3t-5} \right)\left( {t-1} \right).
\end{align}

From (\ref{eq6}) we observe that we are unable to find unique value of $t$, when
the function is positive. But for at least $t\in [-1,2]-(1,\textstyle{5
\over 3})$, $x>0$, $x\ne 1$, we have ${f}''_{M_{t} } (x)\ge 0$. Also, we
have $f_{L_{t} } (1)=0$. Thus according to Lemma 1.1, the measure $L_{t}
(a,b)$ is convex for all $(a,b)\in {\rm R}_{+}^{2} $,
$t=-1,0,1\mbox{\thinspace and\thinspace }2$. Testing individually for fix
$t\in {\rm N}$, we can check the convexity for other measures too, for
example for $t=3$, $L_{3} (a,b)$ is convex.

\bigskip
\textbf{Monotonicity: }Calculating the first order derivative of the
function $f_{L_{t} } (x)$ with respect to $t$, we have
\[
\frac{d\left( {f_{L_{t} } (x)} \right)}{dt}=\frac{\left( {x-1}
\right)^{2}\left( {x+1} \right)^{t}}{\left( {2\sqrt x } \right)^{t+1}}\ln
\left( {\frac{\left( {x+1} \right)^{2}}{4x}} \right).
\]
We can easily check that for all $x>0,\,x\ne 1$, ${d\left( {f_{L_{t} } (x)}
\right)} \mathord{\left/ {\vphantom {{d\left( {f_{L_{t} } (x)} \right)}
{dt}}} \right. \kern-\nulldelimiterspace} {dt}>0$. This proves that the
function$f_{L_{t} } (x)$ is decreasing with respect to $t$. In view of this
we have
\begin{equation}
\label{eq7}
\textstyle{1 \over 4}\Delta \le h\le \textstyle{1 \over 8}K\le \textstyle{1
\over {16}}\Psi \le \textstyle{1 \over {16}}F\le \textstyle{1 \over {64}}L.
\end{equation}

Also we know that $h(a,b)\le \textstyle{1 \over 8}K(a,b)$. Thus combining
(\ref{eq4}) and (\ref{eq7}), we have
\[
 D_{SA} \le \left\{ {\begin{array}{l}
 \textstyle{3 \over 4}D_{SN} \\
 \textstyle{1 \over 3}D_{SH} \le \textstyle{1 \over 4}\Delta \\
 \end{array}} \right\}\le \left\{ {\begin{array}{l}
 \textstyle{3 \over 7}D_{CN} \le \left\{ {\begin{array}{l}
 D_{CS} \\
 \textstyle{1 \over 3}D_{CG} \le \textstyle{3 \over 5}D_{RG} \\
 \end{array}} \right. \\
 \textstyle{1 \over 2}D_{SG} \le \textstyle{3 \over 5}D_{RG} \\
 \end{array}} \right\} \le
 \]
\begin{equation}
\label{eq8}
\le h\le \textstyle{1 \over 8}K\le
\textstyle{1 \over {16}}\Psi \le \textstyle{1 \over {16}}F\le \textstyle{1
\over {64}}L.
\end{equation}

As a part of (\ref{eq8}), let us consider the following inequalities:
\begin{equation}
\label{eq9}
2\Delta \le \textstyle{{24} \over 7}D_{CN} \le \textstyle{8 \over 3}D_{CG}
\le \textstyle{{24} \over 5}D_{RG} \le 8h\le K\le \textstyle{1 \over 2}\Psi
\le \textstyle{1 \over 2}F\le \textstyle{1 \over 8}L.
\end{equation}

\begin{remark} Here we have considered only the positive numbers
$a,\,b>0$. All the results remains valid if we consider the probability
distributions $P,Q\in \Gamma_{n} $, where
\[
\Gamma_{n} =\left\{ {P=(p_{1} ,p_{2} ,...,p_{n} )\left| {p_{i}
>0,\sum\limits_{i=1}^n {p_{i} =1} } \right.} \right\},
\, n\ge 2.
\]
\begin{itemize}
\item [(i)] In terms of probability distributions the measure $L_{t} (a,b)$ is written
as
\begin{equation}
\label{eq9a}
L_{t} (P\vert \vert Q)=\sum\limits_{i=1}^n {\frac{\left( {p_{i} -q_{i} }
\right)^{2}\left( {p_{i} +q_{i} } \right)^{t}}{2^{t}\left( {\sqrt {p_{i}
q_{i} } } \right)^{t+1}}} ,\;(P,Q)\in \Gamma_{n} \times \Gamma_{n},\, t\in {\rm Z}
\end{equation}
\item[(ii)] Topsoe \cite{top} considered a different type of generalized triangular discrimination
\begin{equation}
\label{eq9b}
\Delta_{t} (P\vert \vert Q)=\sum\limits_{i=1}^n {\frac{\left( {p_{i} -q_{i}
} \right)^{2t}}{\left( {p_{i} +q_{i} } \right)^{2t-1}}} ,\;(P,Q)\in \Gamma
_{n} \times \Gamma_{n} ,\;t\in {\rm N}_{+}
\end{equation}
\end{itemize}
\end{remark}

In this paper our aim is to study further inequalities by considering the
possible nonnegative differences arising due to (\ref{eq9}).

\section{New Inequalities}

In this section we shall bring inequalities in different stages. In the
first stage the measures considered are the nonnegative differences arsing
due to (\ref{eq9}). This we have done many times until we left with one measure
in the final stage.

\subsection{First Stage}

For simplicity, let us write the expression (\ref{eq9}) as
\begin{equation}
\label{eq10}
W_{1} \le W_{2} \le W_{3} \le W_{4} \le W_{5} \le W_{6} \le W_{7} \le W_{8}
\le W_{9} ,
\end{equation}
where for example $W_{1} =2\Delta $, $W_{2} =\textstyle{{24} \over 7}D_{CN}
$, $W_{9} =\textstyle{8 \over 3}D_{CG} $, etc. We can write
\begin{equation}
\label{eq11}
W_{t} (a,b):=b\,f\left( {\frac{a}{b}} \right),
\quad
t=1,2,...,9,
\end{equation}

\noindent where

\begin{align}
& f_{W_{1} } (x)=2f_{\Delta } (x)=\frac{2\left( {x-1} \right)^{2}}{x+1},\notag\\
& f_{W_{2} } (x)=\frac{24}{7}f_{CN} (x)=\frac{8\left( {\sqrt x -1}
\right)^{2}\left( {2x+3\sqrt x +2} \right)}{7\left( {x+1} \right)},\notag\\
& f_{W_{3} } (x)=\frac{8}{3}f_{CG} (x)=\frac{8\left( {\sqrt x -1}
\right)^{2}\left( {x+\sqrt x +1} \right)}{3\left( {x+1} \right)},\notag
\end{align}
\begin{align}
& f_{W_{4} } (x)=\frac{24}{5}f_{RG} (x)=\frac{8\left( {\sqrt x -1}
\right)^{2}\left( {2x+\sqrt x +2} \right)}{5\left( {x+1} \right)},\notag\\
& f_{W_{5} } (x)=8f_{h} (x)=4\left( {\sqrt x -1} \right)^{2},\notag\\
& f_{W_{6} } (x)=f_{K} (x)=\frac{\left( {x-1} \right)^{2}}{\sqrt x },\notag\\
& f_{W_{7} } (x)=\frac{1}{2}f_{\Psi } (x)=\frac{\left( {x-1}
\right)^{2}\left( {x+1} \right)}{2x},\notag\\
& f_{W_{8} } (x)=\frac{1}{4}f_{F} (x)=\frac{\left( {x^{2}-1}
\right)^{2}}{2x^{3/2}},\notag
\intertext{and}
& f_{W_{9} } (x)=\frac{1}{8}f_{L} (x)=\frac{\left( {x-1} \right)^{2}\left(
{x+1} \right)^{3}}{8x^{2}}.\notag
\end{align}

Calculating the second order derivative of above functions we have
\begin{align}
& {f}''_{W_{1} } (x)=\frac{16}{\left( {x+1} \right)^{3}},\notag\\
& {f}''_{W_{2} } (x)=\frac{2\left[ {\left( {x+1} \right)^{3}+48x^{3/2}}
\right]}{7x^{3/2}\left( {x+1} \right)^{3}},\notag\\
& {f}''_{W_{3} } (x)=\frac{2\left[ {\left( {x+1} \right)^{3}+16x^{3/2}}
\right]}{3x^{3/2}\left( {x+1} \right)^{3}},\notag\\
& {f}''_{W_{4} } (x)=\frac{2\left[ {3\left( {x+1} \right)^{3}+16x^{3/2}}
\right]}{5x^{3/2}\left( {x+1} \right)^{3}},\notag\\
& {f}''_{W_{5} } (x)=\frac{2}{x^{3/2}},\notag\\
& {f}''_{W_{6} } (x)=\frac{3x^{2}+2x+3}{4x^{5/2}};\notag\\
& {f}''_{W_{7} } (x)=\frac{x^{3}+1}{x^{3}},\notag\\
& {f}''_{W_{8} } (x)=\frac{14x^{4}+2x^{2}+15}{16x^{7/2}},\notag
\intertext{and}
& {f}''_{W_{9} } (x)=\frac{\left( {x+1} \right)\left[ {2\left( {x^{4}+1}
\right)+\left( {x^{2}+1} \right)\left( {x-1} \right)^{2}} \right]}{4x^{4}}.\notag
\end{align}

The inequalities (\ref{eq11}) again admits 45 nonnegative differences. These
differences satisfies some natural inequalities given in a \textbf{pyramid}
below:
\newpage
\[
D_{W_{2} W_{1} }^{1} ;
\]
\[
D_{W_{3} W_{2} }^{2} \le D_{W_{3} W_{1} }^{3} ;
\]
\[
D_{W_{4} W_{3} }^{4} \le D_{W_{4} W_{2} }^{5} \le D_{W_{4} W_{1} }^{6} ;
\]
\[
D_{W_{5} W_{4} }^{7} \le D_{W_{5} W_{3} }^{8} \le D_{W_{5} W_{2} }^{9} \le D_{W_{5} W_{1} }^{10} ;
\]
\[
 D_{W_{6} W_{5} }^{11} \le D_{W_{6} W_{4} }^{12} \le D_{W_{6} W_{3} }^{13} \le D_{W_{6} W_{2} }^{14} \le D_{W_{6} W_{1} }^{15} ;
 \]
\[
 D_{W_{7} W_{6} }^{16} \le D_{W_{7} W_{5} }^{17} \le D_{W_{7} W_{4} }^{18} \le D_{W_{7} W_{3} }^{19} \le D_{W_{7} W_{2} }^{20} \le D_{W_{7} W_{1} }^{21} ;
 \]
\[
 D_{W_{8} W_{7} }^{22} \le D_{W_{8} W_{6} }^{23} \le D_{W_{8} W_{5} }^{24} \le D_{W_{8} W_{4} }^{25} \le D_{W_{8} W_{3} }^{26} \le D_{W_{8} W_{2} }^{27} \le D_{W_{8} W_{1} }^{28} ;
 \]
\[
 D_{W_{9} W_{8} }^{29} \le D_{W_{9} W_{7} }^{30} \le D_{W_{9} W_{6} }^{31} \le D_{W_{9} W_{5} }^{32} \le D_{W_{9} W_{4} }^{33} \le D_{W_{9} W_{3} }^{34} \le D_{W_{9} W_{2} }^{35} \le D_{W_{9} W_{1} }^{36}.
 \]

\bigskip
\noindent where $D_{W_{2} W_{1} }^{1} :=W_{2} -W_{1} $, $D_{W_{7} W_{6} }^{16} :=W_{7}
-W_{6} $, etc. After simplifications, we have equalities among first four
lines of the \textbf{pyramid: }
\begin{align}
& \textstyle{7 \over 2}D_{W_{2} W_{1} }^{1} =\textstyle{{21} \over 8}D_{W_{3}
W_{2} }^{2} =\textstyle{3 \over 2}D_{W_{3} W_{1} }^{3} =\textstyle{{15}
\over 8}D_{W_{4} W_{3} }^{4} =\textstyle{{35} \over {32}}D_{W_{4} W_{2}
}^{5} = \notag\\
\label{eq11a}
& \hspace{15pt} =\textstyle{5 \over 6}D_{W_{4} W_{1} }^{6} =\textstyle{5 \over 4}D_{W_{5}
W_{4} }^{7} =\textstyle{3 \over 4}D_{W_{5} W_{3} }^{8} =\textstyle{7 \over
{12}}D_{W_{5} W_{2} }^{9} =\textstyle{1 \over 2}D_{W_{5} W_{1} }^{10}
=\frac{\left( {\sqrt a -\sqrt b } \right)^{4}}{a+b}.
\end{align}

\bigskip
In view of above equalities we are left only with 27 nonnegative convex
measures and these are connected with each other by inequalities given in
the theorem below.

\begin{theorem} The following sequences of inequalities hold:
\begin{align}
& D_{W_{2} W_{1} }^{1} \le \textstyle{1 \over {14}}D_{W_{6} W_{1} }^{15} \le
\textstyle{1 \over {13}}D_{W_{6} W_{2} }^{14} \le D_{W_{6} W_{3} }^{13} \le
D_{W_{6} W_{4} }^{12} \le D_{W_{6} W_{5} }^{11} \le\notag\\
& \hspace{10pt} \le D_{W_{7} W_{1} }^{21} \le D_{W_{7} W_{2} }^{20} \le D_{W_{7} W_{3}
}^{19} \le D_{W_{7} W_{4} }^{18} \le D_{W_{7} W_{5} }^{17} \le D_{W_{7}
W_{6} }^{16} \le\notag\\
& \hspace{20pt} \le D_{W_{8} W_{1} }^{28} \le D_{W_{8} W_{2} }^{27} \le D_{W_{8} W_{3}
}^{26} \le D_{W_{8} W_{4} }^{25} \le D_{W_{8} W_{5} }^{24} \le D_{W_{8}
W_{6} }^{23} \le \left\{ {\begin{array}{l}
 D_{W_{8} W_{7} }^{22} \\
 D_{W_{9} W_{1} }^{36} \\
 \end{array}} \right\}\le \notag\\
\label{eq12}
& \hspace{30pt}\le D_{W_{9} W_{2} }^{35} \le D_{W_{9} W_{3} }^{34} \le D_{W_{9} W_{4}
}^{33} \le D_{W_{9} W_{5} }^{32} \le D_{W_{9} W_{6} }^{31} \le D_{W_{9}
W_{7} }^{30} \le D_{W_{9} W_{8} }^{29}.
\end{align}
\end{theorem}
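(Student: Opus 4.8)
The plan is to establish the entire chain \eqref{eq12} by repeatedly invoking Lemma 1.3, which reduces each individual inequality $D^{i}_{W_m W_k}\le D^{j}_{W_{m'}W_{k'}}$ to a statement about the ratio of second derivatives of the associated functions $f$. Concretely, for each consecutive pair in the chain I would form the two generating functions $f_1=f_{D^{i}_{W_mW_k}}=f_{W_m}-f_{W_k}$ and $f_2=f_{D^{j}_{W_{m'}W_{k'}}}=f_{W_{m'}}-f_{W_{k'}}$; these are the differences of the $f_{W_\ell}$ listed explicitly in Section 2, each satisfies $f(1)=f'(1)=0$ (since every $f_{W_\ell}$ does, the means all agreeing at $x=1$), and each is twice differentiable on ${\rm R}_+$ with positive second derivative by the convexity computations already carried out (the explicit formulas for $f''_{W_\ell}$ are tabulated, so $f''_1=f''_{W_m}-f''_{W_k}>0$ etc. follow from the ordering of the second derivatives). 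Then it suffices to check
\[
\frac{f''_1(x)}{f''_2(x)}\le 1\qquad\text{for all }x>0,
\]
and Lemma 1.3 (with $\beta=1$) gives $\varphi_{f_1}(a,b)\le\varphi_{f_2}(a,b)$, i.e. the desired inequality between the two $D$'s.

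The bulk of the work is therefore the $44$ separate ratio checks (one per consecutive step in the four-line display, plus the branch around $\{D^{22}_{W_8W_7},D^{36}_{W_9W_1}\}$). Each reduces, after clearing the common factors, to showing that a concrete polynomial (or polynomial-in-$\sqrt{x}$) is non-negative on $(0,\infty)$. I would handle these uniformly by the substitution $x=u^2$, $u>0$, turning every expression into an honest polynomial inequality in $u$, and then verifying non-negativity either by exhibiting a sum-of-squares / factored form with manifestly non-negative coefficients, or by the AM-GM–type observation that symmetric polynomials of the form $\sum c_k(u^{2k}+u^{-2k})$ with the right sign pattern are dominated appropriately. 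For the ``same-$W$'' sub-chains — e.g. $D^{15}_{W_6W_1}\le D^{14}_{W_6W_2}\le\cdots\le D^{11}_{W_6W_5}$ — the inequalities are immediate from \eqref{eq11}/\eqref{eq10}, since $W_1\le W_2\le\cdots$ forces $W_6-W_1\ge W_6-W_2\ge\cdots$; only the ``jump'' steps between consecutive blocks (from a $W_6$-difference to a $W_7$-difference, from $W_7$ to $W_8$, from $W_8$ to $W_9$) and the three leading normalizing steps with the constants $\tfrac1{14},\tfrac1{13}$ require the Lemma 1.3 machinery in earnest.

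I expect the main obstacle to be the bookkeeping rather than any single hard estimate: verifying that for every one of the cross-block jumps the second-derivative ratio is genuinely $\le 1$ on all of $(0,\infty)$, including the limits $x\to0^+$ and $x\to\infty$ where several of these $f''_{W_\ell}$ have different growth orders (e.g. $f''_{W_9}$ grows like $x^2$ while $f''_{W_8}$ grows like $x^{1/2}$, which is exactly what makes $D^{29}_{W_9W_8}$ the maximal element). A secondary subtlety is the unordered pair $\{D^{22}_{W_8W_7},D^{36}_{W_9W_1}\}$: I would show separately that each of these two is $\ge D^{23}_{W_8W_6}$ and $\le D^{35}_{W_9W_2}$, and make no claim about their mutual order, which is consistent with the brace notation in the statement. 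Once all the ratio inequalities are in hand, the chain \eqref{eq12} assembles by transitivity, completing the proof.
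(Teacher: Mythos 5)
Your overall reduction is in the right spirit, but there are two genuine gaps. The most serious one: you dispose of the within-block sub-chains in the wrong direction. From (\ref{eq10}) the ordering $W_1\le W_2\le\cdots$ gives $W_6-W_1\ge W_6-W_2\ge\cdots\ge W_6-W_5$, i.e. $D_{W_6W_1}^{15}\ge D_{W_6W_2}^{14}\ge\cdots\ge D_{W_6W_5}^{11}$ --- the \emph{reverse} of the segment you call ``immediate''. The content of (\ref{eq12}) inside each block is precisely the opposite, non-trivial direction: each consecutive pair holds only with a specific constant larger than $1$ (the proof establishes $D_{W_6W_2}^{14}\le\textstyle{39\over35}D_{W_6W_3}^{13}$, $D_{W_6W_3}^{13}\le\textstyle{25\over21}D_{W_6W_4}^{12}$, $D_{W_6W_4}^{12}\le\textstyle{7\over5}D_{W_6W_5}^{11}$, and similarly in the $W_7$, $W_8$, $W_9$ blocks), and each such step needs its own verification. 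So roughly twenty of the twenty-seven steps you set aside as consequences of monotonicity are exactly the ones that carry the work; monotonicity alone proves nothing in the direction the chain asserts. (Also, the chain contains $27$ inequalities, not $44$.)

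Second, your sole tool is the second-derivative-ratio criterion (the paper's Lemma 1.2 --- there is no Lemma 1.3), and that criterion is only sufficient: $\sup_x f_1''(x)/f_2''(x)$ can strictly exceed the best constant $c$ in $\varphi_{f_1}\le c\,\varphi_{f_2}$. The paper applies this criterion once (part 1), then explicitly warns that ``in some cases we are unable to deduce the results by applications of first approach'' and switches for every remaining part to a direct method: it writes each difference $\beta D_2-D_1$ as $b\,f_V(a/b)$ with $f_V(x)$ an explicit product of $(\sqrt x-1)^{2k}$ and a positive rational expression (the functions in (\ref{eq13})--(\ref{eq26})), so non-negativity is manifest. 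Your plan has no fallback for the steps where the ratio exceeds the target constant somewhere on $(0,\infty)$. Relatedly, Lemma 1.2 requires both $f_1$ and $f_2$ to be convex with $f_2''>0$; you justify convexity of the differences by ``the ordering of the second derivatives'', but (\ref{eq10}) orders the functions $f_{W_t}$, not their second derivatives, and the pointwise chain $f''_{W_1}\le\cdots\le f''_{W_9}$ is a strictly stronger claim that is established nowhere and that you do not propose to prove. The repair is to adopt the paper's second approach throughout: for each of the $27$ steps, compute the linear combination of the $W$'s explicitly and exhibit it as a manifestly non-negative function; this is both elementary and what produces the measures $V_1,\dots,V_{14}$ used in the rest of the paper.
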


\begin{proof} We shall prove the above theorem by parts.

\begin{enumerate}
\item \textbf{For }$\bf{D_{W_{2} W_{1} }^{1} \le \textstyle{1 \over {14}}D_{W_{6} W_{1} }^{15}} $\textbf{: }We shall apply two approach to prove this result.

\textbf{1}$^{\mathrm{\mathbf{st}}}$\textbf{ Approach:} Let us consider a
function
\[
g_{W_{2} W_{1} \mathunderscore W_{6} W_{1} } (x)=\frac{{f}''_{W_{2} W_{1} }
(x)}{{f}''_{W_{6} W_{1} } (x)}
\quad
=\frac{{f}''_{W_{2} } (x)-{f}''_{W_{1} } (x)}{{f}''_{W_{6} }
(x)-{f}''_{W_{1} } (x)}
\]
After simplifications, we have
\[
g_{W_{2} W_{1} \mathunderscore W_{6} W_{1} } (x)=\frac{8x\left(
{x^{2}+2x^{3/2}+6x+2\sqrt x +1} \right)}{7\left( {\begin{array}{l}
 3x^{4}+6x^{7/2}+20x^{3}+34x^{5/2}+ \\
 +66x^{2}+34x^{3/2}+20x+6\sqrt x +3 \\
 \end{array}} \right)}
\]
\begin{align}
& {g}'_{W_{2} W_{1} \mathunderscore W_{6} W_{1} } (x) =\notag\\
& \hspace{10pt} =-\frac{48\left( {x-1}
\right)\left( {x+1} \right)^{2}\left( {\begin{array}{l}
 x^{3}+4x^{5/2}+15x^{2}+ \\
 +20x^{3/2}+15x+4\sqrt x +1 \\
 \end{array}} \right)}{7\left( {\begin{array}{l}
 3x^{4}+6x^{7/2}+20x^{3}+34x^{5/2}+ \\
 +66x^{2}+34x^{3/2}+20x+6\sqrt x +3 \\
 \end{array}} \right)}
\begin{cases}
 {>0} & {x<1} \\
 {<0} & {x>1} \\
\end{cases}\notag
\end{align}
and
\[
\beta_{W_{2} W_{1} \mathunderscore W_{6} W_{1} } =\mathop {\sup }\limits_{x\in
(0,\infty )} g_{W_{2} W_{1} \mathunderscore W_{6} W_{1} } (x)=g_{W_{2} W_{1}
\mathunderscore W_{6} W_{1} } (1)=\frac{1}{14}.
\]
By the application Lemma 1.2 we get the required result.

\bigskip
\textbf{2}$^{\mathrm{\mathbf{nd}}}$\textbf{ Approach:} We shall use an
alternative approach to prove the above result. We know that $\beta_{W_{2}
W_{1} \mathunderscore W_{6} W_{1} } =g_{W_{2} W_{1} \mathunderscore W_{6}
W_{1} } (1)={{f}''_{W_{2} W_{1} } (1)} \mathord{\left/ {\vphantom
{{{f}''_{W_{2} W_{1} } (1)} {{f}''_{W_{6} W_{1} } (1)}}} \right.
\kern-\nulldelimiterspace} {{f}''_{W_{6} W_{1} } (1)}=\textstyle{1 \over
{14}}$. In order to prove the result we need to show that $\textstyle{1
\over {14}}D_{W_{6} W_{1} }^{15} -D_{W_{2} W_{1} }^{1} \ge 0$. By
considering the difference $\textstyle{1 \over {14}}D_{W_{6} W_{1} }^{15}
-D_{W_{2} W_{1} }^{1} $, we have
\[
\frac{1}{14}D_{W_{6} W_{1} }^{15} -D_{W_{2} W_{1} }^{1}
=\frac{1}{14}\left( {W_{6} +13W_{1} -14W_{2} } \right)
=\frac{1}{14}V_{1} :=b{\kern 1pt}f_{V_{1} } \left( {\frac{a}{b}} \right),
\]
where
\begin{equation}
\label{eq13}
f_{V_{1} } (x)=\frac{\left( {\sqrt x -1} \right)^{6}}{\sqrt x \left( {x+1}
\right)}>0,
\quad
\forall x>0,\,x\ne 1.
\end{equation}
Since $V_{1} (a,b)\ge 0$, we get the required result.

\bigskip
\textbf{Note: }\textit{In two approaches applied above, we observe that the second one is easier. Moreover, in some cases we are unable to deduce the results by applications of first approach. In proving other parts, we shall only apply the second one. Without specifying, we shall frequently use the second derivatives }${f}''_{W_{t} } (x), t=1,2,...,9$\textit{ written above.}

\item \textbf{For }$\bf{D_{W_{6} W_{1} }^{15} \le \textstyle{{14} \over {13}}D_{W_{6} W_{2} }^{14}} $\textbf{: }Let us consider a function $g_{W_{6} W_{1} \mathunderscore W_{6} W_{2} } (x)={{f}''_{W_{6} W_{1} } (x)} \mathord{\left/ {\vphantom {{{f}''_{W_{6} W_{1} } (x)} {{f}''_{W_{6} W_{2} } (x)}}} \right. \kern-\nulldelimiterspace} {{f}''_{W_{6} W_{2} } (x)}$. After simplifications, we have
\[
g_{W_{6} W_{1} \mathunderscore W_{6} W_{2} } (x)=\frac{7\left(
{\begin{array}{l}
 3x^{4}+6x^{7/2}+20x^{3}+34x^{5/2}+ \\
 +66x^{2}+34x^{3/2}+20x+6\sqrt x +3 \\
 \end{array}} \right)}{3\left( {\begin{array}{l}
 7x^{4}+14x^{7/2}+44x^{3}+74x^{5/2}+ \\
 +138x^{2}+74x^{3/2}+44x+14\sqrt x +7 \\
 \end{array}} \right)},
\]
\[
\beta_{W_{6} W_{1} \mathunderscore W_{6} W_{2} } =g_{W_{6} W_{1}
\mathunderscore W_{6} W_{2} } (1)=\frac{14}{13}
\]
and
\[
\frac{14}{13}D_{W_{6} W_{2} }^{14} -D_{W_{6} W_{1} }^{15}
=\frac{1}{13}\left( {W_{6} +13W_{1} -14W_{2} } \right)
=\frac{1}{13}V_{1} .
\]

\item \textbf{For }$\bf{D_{W_{6} W_{2} }^{14} \le \textstyle{{39} \over {35}}D_{W_{6} W_{3} }^{13}} $\textbf{: }Let us consider a function $g_{W_{6} W_{2} \mathunderscore W_{6} W_{3} } (x)={{f}''_{W_{6} W_{2} } (x)} \mathord{\left/ {\vphantom {{{f}''_{W_{6} W_{2} } (x)} {{f}''_{W_{6} W_{3} } (x)}}} \right. \kern-\nulldelimiterspace} {{f}''_{W_{6} W_{3} } (x)}$. After simplifications, we have
\[
g_{W_{6} W_{2} \mathunderscore W_{6} W_{3} } (x)=\frac{9}{7}\frac{\left(
{\begin{array}{l}
 7x^{4}+14x^{7/2}+44x^{3}+74x^{5/2}+ \\
 +138x^{2}+74x^{3/2}+44x+14\sqrt x +7 \\
 \end{array}} \right)}{\left( {\begin{array}{l}
 9x^{4}+18x^{7/2}+52x^{3}+86x^{5/2}+ \\
 +150x^{2}+86x^{3/2}+52x+18\sqrt x +9 \\
 \end{array}} \right)},
\]
\[
\beta_{W_{6} W_{2} \mathunderscore W_{6} W_{3} } =g_{W_{6} W_{2}
\mathunderscore W_{6} W_{3} } (1)=\frac{39}{35}
\]
and
\[
\frac{39}{35}D_{W_{6} W_{3} }^{13} -D_{W_{6} W_{2} }^{14}
=\frac{1}{35}\left( {4W_{6} +35W_{2} -39W_{2} } \right)
=\frac{4}{35}V_{1}.
\]

\item \textbf{For }$D_{W_{6} W_{3} }^{13} \le \textstyle{{25} \over {21}}D_{W_{6} W_{4} }^{12} $\textbf{: }Let us consider a function $g_{W_{6} W_{3} \mathunderscore W_{6} W_{4} } (x)={{f}''_{W_{6} W_{3} } (x)} \mathord{\left/ {\vphantom {{{f}''_{W_{6} W_{3} } (x)} {{f}''_{W_{6} W_{4} } (x)}}} \right. \kern-\nulldelimiterspace} {{f}''_{W_{6} W_{4} } (x)}$. After simplifications, we have
\[
g_{W_{6} W_{3} \mathunderscore W_{6} W_{4} } (x)=\frac{5\left(
{\begin{array}{l}
 9x^{4}+18x^{7/2}+52x^{3}+86x^{5/2}+ \\
 +150x^{2}+86x^{3/2}+52x+18\sqrt x +9 \\
 \end{array}} \right)}{3\left( {\begin{array}{l}
 15x^{4}+30x^{7/2}+76x^{3}+122x^{5/2}+ \\
 +186x^{2}+122x^{3/2}+76x+30\sqrt x +15 \\
 \end{array}} \right)},
\]
\[
\beta_{W_{6} W_{3} \mathunderscore W_{6} W_{4} } =g_{W_{6} W_{3}
\mathunderscore W_{6} W_{4} } (1)=\frac{25}{21}
\]
and
\[
\frac{25}{21}D_{W_{6} W_{4} }^{12} -D_{W_{6} W_{3} }^{13}
=\frac{1}{21}\left( {4W_{6} +21W_{3} -25W_{4} } \right)
=\frac{4}{21}V_{1} .
\]

\item \textbf{For }$\bf{D_{W_{6} W_{4} }^{12} \le \textstyle{7 \over 5}D_{W_{6} W_{5} }^{11}} $\textbf{: }Let us consider a function $g_{W_{6} W_{4} \mathunderscore W_{6} W_{5} } (x)={{f}''_{W_{6} W_{4} } (x)} \mathord{\left/ {\vphantom {{{f}''_{W_{6} W_{4} } (x)} {{f}''_{W_{6} W_{5} } (x)}}} \right. \kern-\nulldelimiterspace} {{f}''_{W_{6} W_{5} } (x)}$. After simplifications, we have
\[
g_{W_{6} W_{4} \mathunderscore W_{6} W_{5} } (x)=\frac{\left(
{\begin{array}{l}
 15x^{4}+30x^{7/2}+76x^{3}+122x^{5/2}+ \\
 +186x^{2}+122x^{3/2}+76x+30\sqrt x +15 \\
 \end{array}} \right)}{15\left( {\sqrt x +1} \right)^{2}\left( {x+1}
\right)^{3}},
\]
\[
\beta_{W_{6} W_{4} \mathunderscore W_{6} W_{5} } =g_{W_{6} W_{4}
\mathunderscore W_{6} W_{5} } (1)=\frac{7}{5}.
\]
and
\[
\frac{5}{3}D_{W_{4} W_{2} }^{5} -D_{W_{5} W_{2} }^{9}
=\frac{1}{5}\left( {2W_{6} +5W_{4} -7W_{5} } \right)
=\frac{2}{5}V_{1} \quad .
\]

\item \textbf{For }$\bf{D_{W_{6} W_{5} }^{11} \le \textstyle{1 \over 4}D_{W_{7} W_{1} }^{21}} $\textbf{: }Let us consider a function $g_{W_{6} W_{5} \mathunderscore W_{7} W_{1} } (x)={{f}''_{W_{6} W_{5} } (x)} \mathord{\left/ {\vphantom {{{f}''_{W_{6} W_{5} } (x)} {{f}''_{W_{7} W_{1} } (x)}}} \right. \kern-\nulldelimiterspace} {{f}''_{W_{7} W_{1} } (x)}$. After simplifications, we have
\[
g_{W_{6} W_{5} \mathunderscore W_{7} W} (x)=\frac{3\sqrt x \left( {x+1}
\right)^{3}}{4\left( {x^{4}+5x^{3}+12x^{2}+5x+1} \right)},
\]
\[
\beta_{W_{6} W_{5} \mathunderscore W_{7} W_{1} } =g_{W_{6} W_{5}
\mathunderscore W_{7} W_{1} } (1)=\frac{1}{4}.
\]
and
\[
\frac{1}{4}D_{W_{7} W_{1} }^{21} -D_{W_{6} W_{5} }^{11}
=\frac{1}{4}\left( {W_{7} +4W_{5} -W_{1} -4W_{6} } \right)
=\frac{1}{8}V_{2} :=\frac{1}{8}b{\kern 1pt}f_{V_{2} } \left( {\frac{a}{b}}
\right),
\]
where
\begin{equation}
\label{eq14}
f_{V_{2} } (x)==\frac{\left( {\sqrt x -1} \right)^{8}}{x\left( {x+1}
\right)}>0,
\quad
\forall x>0,\,x\ne 1.
\end{equation}

\item \textbf{For }$\bf{D_{W_{7} W_{1} }^{21} \le \textstyle{{28} \over {27}}D_{W_{7} W_{2} }^{20}} $\textbf{: }Let us consider a function $g_{W_{7} W_{1} \mathunderscore W_{7} W_{2} } (x)={{f}''_{W_{7} W_{1} } (x)} \mathord{\left/ {\vphantom {{{f}''_{W_{7} W_{1} } (x)} {{f}''_{W_{7} W_{2} } (x)}}} \right. \kern-\nulldelimiterspace} {{f}''_{W_{7} W_{2} } (x)}$. After simplifications, we have
\[
g_{W_{7} W_{1} \mathunderscore W_{7} W_{2} } (x)=\frac{7\left( {\sqrt x +1}
\right)^{2}\left( {x^{4}+5x^{3}+12x^{2}+5x+1} \right)}{\left(
{\begin{array}{l}
 7x^{5}+14x^{9/2}+42x^{4}+68x^{7/2}+115x^{3}+ \\
 +156x^{5/2}+115x^{2}+68x^{3/2}+42x+14\sqrt x +7 \\
 \end{array}} \right)}
\]
\[
\beta_{W_{7} W_{1} \mathunderscore W_{7} W_{2} } =g_{W_{7} W_{1}
\mathunderscore W_{7} W_{2} } (1)=\frac{28}{27}.
\]
and
\[
\frac{28}{27}D_{W_{7} W_{2} }^{20} -D_{W_{7} W_{1} }^{21}
=\frac{1}{27}\left( {W_{7} +27W_{1} -28W_{2} } \right)
=\frac{1}{54}V_{3} :=\frac{1}{54}b{\kern 1pt}f_{V_{3} } \left( {\frac{a}{b}}
\right),
\]
where
\begin{equation}
\label{eq15}
f_{V_{3} } (x)=\frac{\left( {x+6\sqrt x +1} \right)\left( {\sqrt x -1}
\right)^{6}}{x\left( {x+1} \right)}>0,
\quad
\forall x>0,\,\,x\ne 1
\end{equation}

\item \textbf{For }$\bf{D_{W_{7} W_{2} }^{20} \le \textstyle{{81} \over {77}}D_{W_{7} W_{3} }^{19} }$\textbf{: }Let us consider a function $g_{W_{7} W_{2} \mathunderscore W_{7} W_{3} } (x)={{f}''_{W_{7} W_{2} } (x)} \mathord{\left/ {\vphantom {{{f}''_{W_{7} W_{2} } (x)} {{f}''_{W_{7} W_{3} } (x)}}} \right. \kern-\nulldelimiterspace} {{f}''_{W_{7} W_{3} } (x)}$. After simplifications, we have
\[
g_{W_{7} W_{2} \mathunderscore W_{7} W_{3} } (x)=\frac{3\left(
{\begin{array}{l}
 7x^{5}+14x^{9/2}+42x^{4}+68x^{7/2}+115x^{3}+ \\
 +156x^{5/2}+115x^{2}+68x^{3/2}+42x+14\sqrt x +7 \\
 \end{array}} \right)}{7\left( {\begin{array}{l}
 3x^{5}+6x^{9/2}+18x^{4}+28x^{7/2}+47x^{3}+ \\
 +60x^{5/2}+47x^{2}+28x^{3/2}+18x+6\sqrt x +3 \\
 \end{array}} \right)},
\]
\[
\beta_{W_{7} W_{2} \mathunderscore W_{7} W_{3} } =g_{W_{7} W_{2}
\mathunderscore W_{7} W_{3} } (1)=\frac{81}{77}.
\]
and
\[
\frac{81}{77}D_{W_{7} W_{3} }^{19} -D_{W_{7} W_{2} }^{20}
=\frac{1}{77}\left( {4W_{7} +77W_{2} -81W_{3} } \right)=\frac{2}{77}V_{3} .
\]

\item \textbf{For }$D_{W_{7} W_{3} }^{19} \le \textstyle{{55} \over {51}}D_{W_{7} W_{4} }^{18} $\textbf{: }Let us consider a function $g_{W_{7} W_{3} \mathunderscore W_{7} W_{4} } (x)={{f}''_{W_{7} W_{3} } (x)} \mathord{\left/ {\vphantom {{{f}''_{W_{7} W_{3} } (x)} {{f}''_{W_{7} W_{4} } (x)}}} \right. \kern-\nulldelimiterspace} {{f}''_{W_{7} W_{4} } (x)}$. After simplifications, we have
\[
g_{W_{7} W_{3} \mathunderscore W_{7} W_{4} } (x)=\frac{5\left(
{\begin{array}{l}
 3x^{5}+6x^{9/2}+18x^{4}+28x^{7/2}+47x^{3}+ \\
 +60x^{5/2}+47x^{2}+28x^{3/2}+18x+6\sqrt x +3 \\
 \end{array}} \right)}{3\left( {\begin{array}{l}
 5x^{5}+10x^{9/2}+30x^{4}+44x^{7/2}+73x^{3}+ \\
 +84x^{5/2}+73x^{2}+44x^{3/2}+30x+10\sqrt x +5 \\
 \end{array}} \right)}
\]
\[
\beta_{W_{7} W_{3} \mathunderscore W_{7} W_{4} } =g_{W_{7} W_{3}
\mathunderscore W_{7} W_{4} } (1)=\frac{55}{51}
\]
and
\[
\frac{55}{51}D_{W_{7} W_{4} }^{18} -D_{W_{7} W_{3} }^{19}
=\frac{1}{51}\left( {4W_{7} +51W_{3} -55W_{4} } \right)
=\frac{2}{51}V_{3} \quad .
\]

\item \textbf{For }$\bf{D_{W_{7} W_{4} }^{18} \le \textstyle{{17} \over {15}}D_{W_{7} W_{5} }^{17}} $\textbf{: }Let us consider a function $g_{W_{7} W_{4} \mathunderscore W_{7} W_{5} } (x)={{f}''_{W_{7} W_{4} } (x)} \mathord{\left/ {\vphantom {{{f}''_{W_{7} W_{4} } (x)} {{f}''_{W_{7} W_{5} } (x)}}} \right. \kern-\nulldelimiterspace} {{f}''_{W_{7} W_{5} } (x)}$. After simplifications, we have
\[
g_{W_{7} W_{4} \mathunderscore W_{7} W_{5} } (x)=\frac{\left(
{\begin{array}{l}
 5x^{5}+10x^{9/2}+30x^{4}+44x^{7/2}+73x^{3}+ \\
 +84x^{5/2}+73x^{2}+44x^{3/2}+30x+10\sqrt x +5 \\
 \end{array}} \right)}{5\left( {x+\sqrt x +1} \right)^{2}\left( {x+1}
\right)^{3}},
\]
\[
\beta_{W_{7} W_{4} \mathunderscore W_{7} W_{5} } =g_{W_{7} W_{4}
\mathunderscore W_{7} W_{5} } (1)=\frac{17}{15}
\]
and
\[
\frac{17}{15}D_{W_{7} W_{5} }^{17} -D_{W_{7} W_{4} }^{18}
=\frac{1}{15}\left( {2W_{7} +15W_{4} -17W_{5} } \right)=\frac{1}{15}V_{3} .
\]

\item \textbf{For }$D_{W_{7} W_{5} }^{17} \le \textstyle{3 \over 2}D_{W_{7} W_{6} }^{16} $\textbf{: }Let us consider a function $g_{W_{7} W_{5} \mathunderscore W_{7} W_{6} } (x)={{f}''_{W_{7} W_{5} } (x)} \mathord{\left/ {\vphantom {{{f}''_{W_{7} W_{5} } (x)} {{f}''_{W_{7} W_{6} } (x)}}} \right. \kern-\nulldelimiterspace} {{f}''_{W_{7} W_{6} } (x)}$. After simplifications, we have
\[
g_{W_{7} W_{5} \mathunderscore W_{7} W_{6} } (x)=\frac{4\left( {x+\sqrt x
+1} \right)^{2}}{4x^{2}+5x^{3/2}+6x+5\sqrt x +4},
\]
\[
\beta_{W_{7} W_{5} \mathunderscore W_{7} W_{6} } =g_{W_{7} W_{5}
\mathunderscore W_{7} W_{6} } (1)=\frac{3}{2}
\]
and
\[
\frac{3}{2}D_{W_{7} W_{6} }^{16} -D_{W_{7} W_{5} }^{17}
=\frac{1}{2}\left( {W_{7} +2W_{5} -3W_{6} } \right)=\frac{1}{4}V_{4}
:=\frac{1}{4}b{\kern 1pt}f_{V_{4} } \left( {\frac{a}{b}} \right),
\]
where
\begin{equation}
\label{eq16}
f_{V_{4} } (x)=\frac{\left( {\sqrt x -1} \right)^{6}}{x}>0,
\quad
\forall x>0,\,\,x\ne 1.
\end{equation}

\item \textbf{For }$\bf{D_{W_{7} W_{6} }^{16} \le \textstyle{1 \over 3}D_{W_{8} W_{1} }^{28}} $\textbf{: }Let us consider a function $g_{W_{7} W_{6} \mathunderscore W_{8} W_{1} } (x)={{f}''_{W_{7} W_{6} } (x)} \mathord{\left/ {\vphantom {{{f}''_{W_{7} W_{6} } (x)} {{f}''_{W_{8} W_{1} } (x)}}} \right. \kern-\nulldelimiterspace} {{f}''_{W_{8} W_{1} } (x)}$. After simplifications, we have
\[
g_{W_{7} W_{6} \mathunderscore W_{8} W_{1} } (x)=\frac{4\sqrt x \left( {x+1}
\right)^{3}\left( {4x^{2}+5x^{3/2}+6x+5\sqrt x +4} \right)}{\left(
{\begin{array}{l}
 15+364x^{5/2}+30\sqrt x +492x^{3}+364x^{7/2}+ \\
 +150x^{9/2}+90x+90x^{5}+30x^{11/2}+15x^{6}+ \\
 +257x^{4}+257x^{2}+150x^{3/2} \\
 \end{array}} \right)},
\]
\[
\beta_{W_{7} W_{6} \mathunderscore W_{8} W_{1} } =g_{W_{7} W_{6}
\mathunderscore W_{8} W_{1} } (1)=\frac{1}{3}
\]
and
\[
\frac{1}{3}D_{W_{8} W_{1} }^{28} -D_{W_{7} W_{6} }^{16}
=\frac{1}{3}\left( {W_{8} +3W_{6} -W_{1} -3W_{7} } \right)
=\frac{1}{12}V_{5} :=b{\kern 1pt}f_{V_{5} } \left( {\frac{a}{b}} \right),
\]
where
\begin{equation}
\label{eq17}
f_{V_{_{5} } } (x)=\frac{\left( {\sqrt x +1} \right)^{2}\left( {\sqrt x -1}
\right)^{8}}{x^{3/2}\left( {x+1} \right)}>0,
\quad
\forall x>0,\,\,x\ne 1.
\end{equation}

\item \textbf{For }$\bf{D_{W_{8} W_{1} }^{28} \le \textstyle{{42} \over {41}}D_{W_{8} W_{2} }^{27} }$\textbf{: }Let us consider a function $g_{W_{8} W_{1} \mathunderscore W_{8} W_{2} } (x)={{f}''_{W_{8} W_{1} } (x)} \mathord{\left/ {\vphantom {{{f}''_{W_{8} W_{1} } (x)} {{f}''_{W_{8} W_{2} } (x)}}} \right. \kern-\nulldelimiterspace} {{f}''_{W_{8} W_{2} } (x)}$. After simplifications, we have
\[
g_{W_{8} W_{1} \mathunderscore W_{8} W_{2} } (x)=\frac{7\left(
{\begin{array}{l}
 15+364x^{5/2}+30\sqrt x +492x^{3}+364x^{7/2}+ \\
 +150x^{9/2}+90x+90x^{5}+30x^{11/2}+ \\
 +15x^{6}+257x^{4}+257x^{2}+150x^{3/2} \\
 \end{array}} \right)}{3\left( {\begin{array}{l}
 35+828x^{5/2}+70\sqrt x +1084x^{3}+828x^{7/2}+ \\
 +350x^{9/2}+210x+210x^{5}+70x^{11/2}+ \\
 +35x^{6}+589x^{4}+589x^{2}+350x^{3/2} \\
 \end{array}} \right)},
\]
\[
\beta_{W_{8} W_{1} \mathunderscore W_{8} W_{2} } =g_{W_{8} W_{1}
\mathunderscore W_{8} W_{2} } (1)=\frac{42}{41}
\]
and
\[
\frac{42}{41}D_{W_{8} W_{2} }^{27} -D_{W_{8} W_{1} }^{28}
=\frac{1}{41}\left( {W_{8} +41W_{1} -42W_{2} } \right)=\frac{1}{164}V_{6}
:=\frac{1}{164}b{\kern 1pt}f_{V_{6} } \left( {\frac{a}{b}} \right),
\]
where
\begin{equation}
\label{eq18}
f_{V_{6} } (x)=\frac{\left( {x^{2}+6x^{3/2}+22x+6\sqrt x +1} \right)\left(
{\sqrt x -1} \right)^{6}}{x^{3/2}\left( {x+1} \right)}>0,
\quad
\forall x>0,\,\,x\ne 1.
\end{equation}

\item \textbf{For }$\bf{D_{W_{8} W_{2} }^{27} \le \textstyle{{123} \over {119}}D_{W_{8} W_{3} }^{26}} $\textbf{: }Let us consider a function$g_{W_{8} W_{2} \mathunderscore W_{8} W_{3} } (x)={{f}''_{W_{8} W_{2} } (x)} \mathord{\left/ {\vphantom {{{f}''_{W_{8} W_{2} } (x)} {{f}''_{W_{8} W_{3} } (x)}}} \right. \kern-\nulldelimiterspace} {{f}''_{W_{8} W_{3} } (x)}$. After simplifications, we have
\[
g_{W_{8} W_{2} \mathunderscore W_{8} W_{3} } (x)=\frac{9\left(
{\begin{array}{l}
 35+828x^{5/2}+70\sqrt x +1084x^{3}+828x^{7/2}+ \\
 +350x^{9/2}+210x+210x^{5}+70x^{11/2}+ \\
 +35x^{6}+589x^{4}+589x^{2}+350x^{3/2} \\
 \end{array}} \right)}{7\left( {\begin{array}{l}
 45+1028x^{5/2}+90\sqrt x +1284x^{3}+1028x^{7/2}+ \\
 +450x^{9/2}+270x+270x^{5}+90x^{11/2}+ \\
 +45x^{6}+739x^{4}+739x^{2}+450x^{3/2} \\
 \end{array}} \right)}
\]
and
\[
\beta_{W_{8} W_{2} \mathunderscore W_{8} W_{3} } =g_{W_{8} W_{2}
\mathunderscore W_{8} W_{3} } (1)=\frac{123}{119}.
\]
\[
\textstyle{{123} \over {119}}D_{W_{8} W_{3} }^{26} -D_{W_{8} W_{2} }^{27}
=\frac{1}{119}\left( {4W_{8} +119W_{2} -123K_{3} }
\right)=\frac{1}{119}V_{6} .
\]

\item \textbf{For }$\bf{D_{W_{8} W_{3} }^{26} \le \textstyle{{85} \over {81}}D_{W_{8} W_{4} }^{25}} $\textbf{: }Let us consider a function $g_{W_{8} W_{3} \mathunderscore W_{8} W_{4} } (x)={{f}''_{W_{8} W_{3} } (x)} \mathord{\left/ {\vphantom {{{f}''_{W_{8} W_{3} } (x)} {{f}''_{W_{8} W_{4} } (x)}}} \right. \kern-\nulldelimiterspace} {{f}''_{W_{8} W_{4} } (x)}$. After simplifications, we have
\[
g_{W_{8} W_{3} \mathunderscore W_{8} W_{4} } (x)=\frac{5\left(
{\begin{array}{l}
 45+1028x^{5/2}+90\sqrt x +1284x^{3}+1028x^{7/2}+ \\
 +450x^{9/2}+270x+270x^{5}+90x^{11/2}+ \\
 +45x^{6}+739x^{4}+739x^{2}+450x^{3/2} \\
 \end{array}} \right)}{3\left( {\begin{array}{l}
 75+150\sqrt x +1189x^{2}+450x^{5}+1884x^{3}+ \\
 +75x^{6}+150x^{11/2}+450x+1189x^{4}+ \\
 +750x^{3/2}+750x^{9/2}+1628x^{7/2}+1628x^{5/2} \\
 \end{array}} \right)},
\]
\[
\beta_{W_{8} W_{3} \mathunderscore W_{8} W_{4} } =g_{W_{8} W_{3}
\mathunderscore W_{8} W_{4} } (1)=\frac{85}{81}.
\]
and
\[
\frac{85}{81}D_{W_{8} W_{4} }^{25} -D_{W_{8} W_{3} }^{26}
=4W_{8} +81W_{3} -85W_{4} =\frac{1}{81}V_{6} .
\]

\item \textbf{For }$\bf{D_{W_{8} W_{4} }^{25} \le \textstyle{{27} \over {25}}D_{W_{8} W_{5} }^{24}} $\textbf{: }Let us consider a function $g_{W_{8} W_{4} \mathunderscore W_{8} W_{5} } (x)={{f}''_{W_{8} W_{4} } (x)} \mathord{\left/ {\vphantom {{{f}''_{W_{8} W_{4} } (x)} {{f}''_{W_{8} W_{5} } (x)}}} \right. \kern-\nulldelimiterspace} {{f}''_{W_{8} W_{5} } (x)}$. After simplifications, we have
\[
g_{W_{8} W_{4} \mathunderscore W_{8} W_{5} } (x)=\frac{\left(
{\begin{array}{l}
 75+150\sqrt x +1189x^{2}+450x^{5}+1884x^{3}+ \\
 +75x^{6}+150x^{11/2}+450x+1189x^{4}+ \\
 +750x^{3/2}+750x^{9/2}+1628x^{7/2}+1628x^{5/2} \\
 \end{array}} \right)}{75\left( {x+1} \right)^{5}\left( {\sqrt x +1}
\right)^{2}},
\]
\[
\beta_{W_{8} W_{4} \mathunderscore W_{8} W_{5} } =g_{W_{8} W_{4}
\mathunderscore W_{8} W_{5} } (x)=\frac{27}{25}
\]
and
\[
\frac{27}{25}D_{W_{8} W_{5} }^{24} -D_{W_{8} W_{4} }^{25}
=\frac{1}{50}\left( {2W_{8} +25W_{4} -27K_{5} } \right)=\frac{1}{50}V_{6} .
\]

\item \textbf{For }$\bf{D_{W_{8} W_{5} }^{24} \le \textstyle{5 \over 4}D_{W_{8} W_{6} }^{23} }$\textbf{: }Let us consider a function $g_{W_{8} W_{5} \mathunderscore W_{8} W_{6} } (x)={{f}''_{W_{8} W_{5} } (x)} \mathord{\left/ {\vphantom {{{f}''_{W_{8} W_{5} } (x)} {{f}''_{W_{8} W_{6} } (x)}}} \right. \kern-\nulldelimiterspace} {{f}''_{W_{8} W_{6} } (x)}$. After simplifications, we have
\[
g_{W_{8} W_{5} \mathunderscore W_{8} W_{6} } (x)=\frac{5\left( {x+1}
\right)^{2}}{5x^{2}+6x+5},
\]
\[
\beta_{W_{8} W_{5} \mathunderscore W_{8} W_{6} } =g_{W_{8} W_{5}
\mathunderscore W_{8} W_{6} } (1)=\frac{5}{4}
\]
and
\[
\frac{5}{4}D_{W_{8} W_{6} }^{23} -D_{W_{8} W_{5} }^{24}
=\frac{1}{4}\left( {W_{8} +4W_{5} -5W_{6} } \right)=\frac{1}{4}V_{7}
:=\frac{1}{4}b{\kern 1pt}f_{V_{7} } \left( {\frac{a}{b}} \right),
\]
where
\begin{equation}
\label{eq19}
f_{V_{7} } (x)=\frac{\left( {x+6\sqrt x +1} \right)\left( {\sqrt x -1}
\right)^{6}}{x^{3/2}}>0,
\quad
\forall x>0,\,\,x\ne 1.
\end{equation}
Since $f_{W_{8} W_{5} \mathunderscore W_{8} W_{6} } (x)>0$, $\forall
x>0,\,\,x\ne 1$, hence proving the required result.

\item \textbf{For }$\bf{D_{W_{8} W_{6} }^{23} \le 2D_{W_{8} W_{7} }^{22} }$\textbf{: }Let us consider a function $g_{W_{8} W_{6} \mathunderscore W_{8} W_{7} } (x)={{f}''_{W_{8} W_{6} } (x)} \mathord{\left/ {\vphantom {{{f}''_{W_{8} W_{6} } (x)} {{f}''_{W_{8} W_{7} } (x)}}} \right. \kern-\nulldelimiterspace} {{f}''_{W_{8} W_{7} } (x)}$. After simplifications, we have
\[
g_{W_{8} W_{6} \mathunderscore W_{8} W_{7} } (x)=\frac{\left(
{15x^{2}+18x+15} \right)\left( {\sqrt x +1}
\right)^{2}}{15x^{3}+14x^{5/2}+13x^{2}+12x^{3/2}+13x+14\sqrt x +15},
\]
\[
\beta_{W_{8} W_{6} \mathunderscore W_{8} W_{7} } =g_{W_{8} W_{6}
\mathunderscore W_{8} W_{7} } (1)=2.
\]
and
\[
2D_{W_{8} W_{7} }^{22} -D_{W_{8} W_{6} }^{23}
=W_{8} +W_{6} -2W_{7} =\frac{1}{4}V_{8} :=\frac{1}{4}b{\kern 1pt}f_{V_{8} }
\left( {\frac{a}{b}} \right),
\]
where
\begin{equation}
\label{eq20}
f_{V_{8} } (x)=\frac{\left( {\sqrt x +1} \right)^{2}\left( {\sqrt x -1}
\right)^{6}}{x^{3/2}}>0,
\quad
\forall x>0,\,\,x\ne 1.
\end{equation}

\item \textbf{For }$\bf{D_{W_{8} W_{6} }^{23} \le \textstyle{1 \over 2}D_{W_{9} W_{1} }^{36}} $\textbf{: }Let us consider a function $g_{W_{8} W_{6} \mathunderscore W_{9} W_{1} } (x)={{f}''_{W_{8} W_{6} } (x)} \mathord{\left/ {\vphantom {{{f}''_{W_{8} W_{6} } (x)} {{f}''_{W_{9} W_{1} } (x)}}} \right. \kern-\nulldelimiterspace} {{f}''_{W_{9} W_{1} } (x)}$. After simplifications, we have
\[
g_{W_{8} W_{6} \mathunderscore W_{9} W_{1} } (x)=\frac{3\sqrt x \left( {x+1}
\right)^{3}\left( {5x^{2}+6x+5} \right)}{4\left( {x+3} \right)\left( {3x+1}
\right)\left( {x^{4}+2x^{3}+6x^{2}+2x+1} \right)},
\]
\[
\beta_{W_{8} W_{6} \mathunderscore W_{9} W_{1} } =g_{W_{8} W_{6}
\mathunderscore W_{9} W_{1} } (1)=\frac{1}{2}.
\]
and
\[
\frac{1}{2}D_{W_{9} W_{1} }^{36} -D_{W_{8} W_{6} }^{23}
=\frac{1}{2}\left( {W_{9} +2W_{6} -K_{1} -2W_{8} } \right)=\frac{1}{16}V_{9}
:=\frac{1}{16}b{\kern 1pt}f_{V_{9} } \left( {\frac{a}{b}} \right),
\]
where
\begin{equation}
\label{eq21}
f_{V_{9} } (x)=\frac{\left( {\sqrt x +1} \right)^{4}\left( {\sqrt x -1}
\right)^{8}}{16x^{2}\left( {x+1} \right)}>0,
\quad
\forall x>0,\,\,x\ne 1.
\end{equation}

\item \textbf{For }$\bf{D_{W_{9} W_{1} }^{36} \le \textstyle{{56} \over {55}}D_{W_{9} W_{2} }^{35}} $\textbf{: }Let us consider a function $g_{W_{9} W_{1} \mathunderscore W_{9} W_{2} } (x)={{f}''_{W_{9} W_{1} } (x)} \mathord{\left/ {\vphantom {{{f}''_{W_{9} W_{1} } (x)} {{f}''_{W_{9} W_{2} } (x)}}} \right. \kern-\nulldelimiterspace} {{f}''_{W_{9} W_{2} } (x)}$. After simplifications, we have

\[
g_{W_{9} W_{1} \mathunderscore W_{9} W_{2} } (x)=\frac{7\left( {x+3}
\right)\left( {3x+1} \right)\left( {x^{4}+2x^{3}+6x^{2}+2x+1} \right)\left(
{\sqrt x +1} \right)^{2}}{\left( {\begin{array}{l}
 21+42\sqrt x +42x^{13/2}+21x^{7}+399x^{5}+775x^{3}+ \\
 +399x^{2}+775x^{4}+960x^{7/2}+566x^{5/2}+133x+ \\
 +224x^{3/2}+224x^{11/2}+566x^{9/2}+133x^{6} \\
 \end{array}} \right)},
\]
\[
\beta_{W_{9} W_{1} \mathunderscore W_{9} W_{2} } =g_{W_{9} W_{1}
\mathunderscore W_{9} W_{2} } (x)=\frac{56}{55}
\]
and
\[
\frac{56}{55}D_{W_{9} W_{2} }^{35} -D_{W_{9} W_{1} }^{36}
=\frac{1}{55}\left( {W_{9} +55W_{2} -56K_{2} } \right)=\frac{1}{440}V_{10}
=\frac{1}{440}b{\kern 1pt}f_{V_{10} } \left( {\frac{a}{b}} \right),
\]
where
\begin{equation}
\label{eq22}
f_{V_{10} } (x)=\frac{\left( {\begin{array}{l}
 x^{3}+6x^{5/2}+23x^{2}+ \\
 +68x^{3/2}+23x+6\sqrt x +1 \\
 \end{array}} \right)\left( {\sqrt x -1} \right)^{6}}{x^{2}\left( {x+1}
\right)}>0,
\forall x>0,\,\,x\ne 1.
\end{equation}

\item \textbf{For }$\bf{D_{W_{9} W_{2} }^{35} \le \textstyle{{165} \over {161}}D_{W_{9} W_{3} }^{34}} $\textbf{: }Let us consider a function $g_{W_{9} W_{2} \mathunderscore W_{9} W_{3} } (x)={{f}''_{W_{9} W_{2} } (x)} \mathord{\left/ {\vphantom {{{f}''_{W_{9} W_{2} } (x)} {{f}''_{W_{9} W_{3} } (x)}}} \right. \kern-\nulldelimiterspace} {{f}''_{W_{9} W_{3} } (x)}$. After simplifications, we have
\[
g_{W_{9} W_{2} \mathunderscore W_{9} W_{3} } (x)=\frac{3\left(
{\begin{array}{l}
 21+42\sqrt x +42x^{13/2}+21x^{7}+399x^{5}+775x^{3}+ \\
 +399x^{2}+775x^{4}+960x^{7/2}+566x^{5/2}+133x+ \\
 +224x^{3/2}+224x^{11/2}+566x^{9/2}+133x^{6} \\
 \end{array}} \right)}{7\left( {\begin{array}{l}
 9+18\sqrt x +9x^{7}+323x^{4}+171x^{5}+57x^{6}+ \\
 +323x^{3}+171x^{2}+18x^{13/2}+96x^{11/2}+ \\
 +238x^{9/2}+238x^{5/2}+57x+96x^{3/2}+384x^{7/2} \\
 \end{array}} \right)},
\]
\[
\beta_{W_{9} W_{2} \mathunderscore W_{9} W_{3} } =g_{W_{9} W_{2}
\mathunderscore W_{9} W_{3} } (1)=\frac{165}{161}.
\]
and
\[
\frac{165}{161}D_{W_{9} W_{3} }^{34} -D_{W_{9} W_{2} }^{35}
=\frac{1}{161}\left( {4W_{9} +161W_{2} -165W_{3} }
\right)=\frac{1}{322}V_{10} .
\]

\item \textbf{For }$\bf{D_{W_{9} W_{3} }^{34} \le \textstyle{{115} \over {111}}D_{W_{9} W_{4} }^{33} }$\textbf{: }Let us consider a function $g_{W_{9} W_{3} \mathunderscore W_{9} W_{4} } (x)={{f}''_{W_{9} W_{3} } (x)} \mathord{\left/ {\vphantom {{{f}''_{W_{9} W_{3} } (x)} {{f}''_{W_{9} W_{4} } (x)}}} \right. \kern-\nulldelimiterspace} {{f}''_{W_{9} W_{4} } (x)}$. After simplifications, we have
\[
g_{W_{9} W_{3} \mathunderscore W_{9} W_{4} } (x)=\frac{5\left(
{\begin{array}{l}
 9+18\sqrt x +9x^{7}+323x^{4}+171x^{5}+57x^{6}+ \\
 +323x^{3}+171x^{2}+18x^{13/2}+96x^{11/2}+ \\
 +238x^{9/2}+238x^{5/2}+57x+96x^{3/2}+384x^{7/2} \\
 \end{array}} \right)}{3\left( {\begin{array}{l}
 15+30\sqrt x +95x+386x^{9/2}+576x^{7/2}+ \\
 +160x^{3/2}+95x^{6}+30x^{13/2}+160x^{11/2}+15x^{7}+ \\
 +517x^{4}+285x^{5}+517x^{3}+285x^{2}+386x^{5/2} \\
 \end{array}} \right)},
\]
\[
\beta_{W_{9} W_{3} \mathunderscore W_{9} W_{4} } =g_{W_{9} W_{3}
\mathunderscore W_{9} W_{4} } (1)=\frac{115}{111}
\]
and
\[
\frac{115}{111}D_{W_{9} W_{4} }^{33} -D_{W_{9} W_{3} }^{34}
=\frac{1}{111}\left( {4W_{9} +111W_{3} -115W_{4} }
\right)=\frac{1}{222}V_{10} .
\]

\item \textbf{For }$\bf{D_{W_{9} W_{4} }^{33} \le \textstyle{{37} \over {35}}D_{W_{9} W_{5} }^{32} }$\textbf{: }Let us consider a function $g_{W_{9} W_{4} \mathunderscore W_{9} W_{5} } (x)={{f}''_{W_{9} W_{4} } (x)} \mathord{\left/ {\vphantom {{{f}''_{W_{9} W_{4} } (x)} {{f}''_{W_{9} W_{5} } (x)}}} \right. \kern-\nulldelimiterspace} {{f}''_{W_{9} W_{5} } (x)}$. After simplifications, we have
\[
g_{W_{9} W_{4} \mathunderscore W_{9} W_{5} } (x)=\frac{\left(
{\begin{array}{l}
 15+30\sqrt x +95x+386x^{9/2}+576x^{7/2}+ \\
 +160x^{3/2}+95x^{6}+30x^{13/2}+160x^{11/2}+15x^{7}+ \\
 +517x^{4}+285x^{5}+517x^{3}+285x^{2}+386x^{5/2} \\
 \end{array}} \right)}{5\left( {x+1} \right)^{3}\left( {\begin{array}{l}
 3x^{4}+6x^{7/2}+10x^{3}+14x^{5/2}+ \\
 +18x^{2}+14x^{3/2}+10x+6\sqrt x +3 \\
 \end{array}} \right)},
\]
\[
\beta_{W_{9} W_{4} \mathunderscore W_{9} W_{5} } =g_{W_{9} W_{4}
\mathunderscore W_{9} W_{5} } (1)=\frac{37}{35}
\]
and
\[
\frac{37}{35}D_{W_{9} W_{5} }^{32} -D_{W_{9} W_{4} }^{33}
=\frac{1}{35}\left( {2W_{9} +35W_{4} -37W_{5} } \right)=\frac{1}{140}V_{10}
.
\]

\item \textbf{For }$\bf{D_{W_{9} W_{5} }^{32} \le \textstyle{7 \over 6}D_{W_{9} W_{6} }^{31}} $\textbf{: }Let us consider a function $g_{W_{9} W_{5} \mathunderscore W_{9} W_{6} } (x)={{f}''_{W_{9} W_{5} } (x)} \mathord{\left/ {\vphantom {{{f}''_{W_{9} W_{5} } (x)} {{f}''_{W_{9} W_{6} } (x)}}} \right. \kern-\nulldelimiterspace} {{f}''_{W_{9} W_{6} } (x)}$. After simplifications, we have
\[
g_{W_{9} W_{5} \mathunderscore W_{9} W_{6} } (x)=\frac{\left(
{\begin{array}{l}
 3x^{4}+6x^{7/2}+10x^{3}+14x^{5/2}+ \\
 +18x^{2}+14x^{3/2}+10x+6\sqrt x +3 \\
 \end{array}} \right)}{\left( {x+\sqrt x +1} \right)\left(
{\begin{array}{l}
 3x^{3}+3x^{5/2}+4x^{2}+ \\
 +4x^{3/2}+4x+3\sqrt x +3 \\
 \end{array}} \right)},
\]
\[
\beta_{W_{9} W_{5} \mathunderscore W_{9} W_{6} } =g_{W_{9} W_{5}
\mathunderscore W_{9} W_{6} } (1)=\frac{7}{6}
\]
and
\[
\frac{7}{6}D_{W_{9} W_{6} }^{31} -D_{W_{9} W_{5} }^{32}
=\frac{1}{6}\left( {W_{9} +6W_{5} -7W_{6} } \right)=\frac{1}{48}V_{11}
:=\frac{1}{48}b{\kern 1pt}f_{V_{11} } \left( {\frac{a}{b}} \right),
\]
where
\begin{equation}
\label{eq23}
f_{V_{11} } (x)=\frac{\left( {x^{2}+6x^{3/2}+22x+6\sqrt x +1} \right)\left(
{\sqrt x -1} \right)^{6}}{x^{2}}>0,
\quad
\forall x>0,\,\,x\ne 1.
\end{equation}

\item \textbf{For }$\bf{D_{W_{8} W_{7} }^{22} \le \textstyle{1 \over 3}D_{W_{9} W_{6} }^{31}} $\textbf{: }Let us consider a function $g_{W_{8} W_{2} \mathunderscore W_{9} W_{6} } (x)={{f}''_{W_{8} W_{2} } (x)} \mathord{\left/ {\vphantom {{{f}''_{W_{8} W_{2} } (x)} {{f}''_{W_{9} W_{6} } (x)}}} \right. \kern-\nulldelimiterspace} {{f}''_{W_{9} W_{6} } (x)}$. After simplifications, we have
\[
g_{W_{8} W_{2} \mathunderscore W_{9} W_{6} } (x)=\frac{\sqrt x \left(
{\begin{array}{l}
 15x^{3}+14x^{5/2}+13x^{2}+ \\
 +12x^{3/2}+13x+14\sqrt x +15 \\
 \end{array}} \right)}{4\left( {x+\sqrt x +1} \right)\left(
{\begin{array}{l}
 3x^{3}+3x^{5/2}+4x^{2}+ \\
 +4x^{3/2}+4x+3\sqrt x +3 \\
 \end{array}} \right)},
\]
\[
\beta_{W_{8} W_{2} \mathunderscore W_{9} W_{6} } =g_{W_{8} W_{2}
\mathunderscore W_{9} W_{6} } (1)=\frac{1}{3}
\]
and
\[
\frac{1}{3}D_{W_{9} W_{6} }^{31} -D_{W_{8} W_{7} }^{22}
=\frac{1}{6}\left( {W_{9} +3W_{7} -W_{6} -3W_{8} }
\right)=\frac{1}{24}V_{12} :=\frac{1}{24}b{\kern 1pt}f_{V_{3} } \left(
{\frac{a}{b}} \right),
\]
where
\begin{equation}
\label{eq24}
f_{V_{12} } (x)=\frac{\left( {\sqrt x +1} \right)^{2}\left( {\sqrt x -1}
\right)^{8}}{x^{2}}>0,
\quad
\forall x>0,\,\,x\ne 1.
\end{equation}

\item \textbf{For }$\bf{D_{W_{9} W_{6} }^{31} \le \textstyle{3 \over 2}D_{W_{9} W_{7} }^{30} }$\textbf{: }Let us consider a function $g_{W_{9} W_{6} \mathunderscore W_{9} W_{7} } (x)={{f}''_{W_{9} W_{6} } (x)} \mathord{\left/ {\vphantom {{{f}''_{W_{9} W_{6} } (x)} {{f}''_{W_{9} W_{7} } (x)}}} \right. \kern-\nulldelimiterspace} {{f}''_{W_{9} W_{7} } (x)}$. After simplifications, we have
\[
g_{W_{9} W_{6} \mathunderscore W_{9} W_{7} } (x)=\frac{\left( {x+\sqrt x +1}
\right)\left( {\begin{array}{l}
 3x^{3}+3x^{5/2}+4x^{2}+ \\
 +4x^{3/2}+4x+3\sqrt x +3 \\
 \end{array}} \right)}{3\left( {x+1} \right)\left( {x^{2}+1} \right)\left(
{\sqrt x +1} \right)^{2}},
\]
\[
\beta_{W_{9} W_{6} \mathunderscore W_{9} W_{7} } =g_{W_{9} W_{6}
\mathunderscore W_{9} W_{7} } (1)=\frac{3}{2}
\]
and
\[
\frac{3}{2}D_{W_{9} W_{7} }^{30} -D_{W_{9} W_{6} }^{31}
=\frac{1}{2}\left( {W_{9} +2W_{6} -3W_{7} } \right)=\frac{1}{16}V_{13}
:=b{\kern 1pt}f_{V_{13} } \left( {\frac{a}{b}} \right),
\]
with
\begin{equation}
\label{eq25}
f_{V_{13} } (x)=\frac{\left( {x-1} \right)^{2}\left( {x+4\sqrt x +1}
\right)\left( {\sqrt x -1} \right)^{4}}{x^{2}}>0,
\quad
\forall x>0,\,\,x\ne 1.
\end{equation}

\item \textbf{For }$\bf{D_{W_{9} W_{7} }^{30} \le 2D_{W_{9} W_{8} }^{29}} $\textbf{: }Let us consider a function $g_{W_{9} W_{7} \mathunderscore W_{9} W_{8} } (x)={{f}''_{W_{9} W_{7} } (x)} \mathord{\left/ {\vphantom {{{f}''_{W_{9} W_{7} } (x)} {{f}''_{W_{9} W_{8} } (x)}}} \right. \kern-\nulldelimiterspace} {{f}''_{W_{9} W_{8} } (x)}$. After simplifications, we have
\[
g_{W_{9} W_{7} \mathunderscore W_{9} W_{8} } (x)=\frac{12\left( {x+1}
\right)\left( {x^{2}+1} \right)\left( {\sqrt x +1} \right)^{2}}{\left(
{\begin{array}{l}
 12x^{4}+9x^{7/2}+10x^{3}+11x^{5/2}+ \\
 +12x^{2}+11x^{3/2}+10x+9\sqrt x +12 \\
 \end{array}} \right)},
\]
\[
\beta_{W_{9} W_{7} \mathunderscore W_{9} W_{8} } =g_{W_{9} W_{7}
\mathunderscore W_{9} W_{8} } (1)=2
\]
and
\[
2D_{W_{9} W_{8} }^{29} -D_{W_{9} W_{7} }^{30}
=W_{9} +W_{7} -2W_{8} =\frac{1}{8}V_{14} :=\frac{1}{8}b{\kern 1pt}f_{V_{14}
} \left( {\frac{a}{b}} \right),
\]
where
\begin{equation}
\label{eq26}
f_{V_{14} } (x)=\frac{\left( {x+1} \right)\left( {\sqrt x +1}
\right)^{2}\left( {\sqrt x -1} \right)^{6}}{8x^{2}}>0,
\quad
\forall x>0,\,\,x\ne 1.
\end{equation}
\end{enumerate}

Combining the results 1-27, we get the proof of (\ref{eq12}).
\end{proof}

\begin{remark} Based on the equalities given in (\ref{eq11a}) we have the following proportionality relations among the six means appearing in Section 1:
\begin{multicols}{2}
\begin{enumerate}
\item $4A=2(C+H)=3R+H;$
\item $3R=C+2A=2C+H;$
\item $3N=2A+G;$
\item $3C+2H=3R+2A;$
\item $C+6A=H+6R;$
\item $C+3N=G+3R;$
\item $3N+2A=2C+2H+G;$
\item $27R+2G=14A+9C+6N;$
\item $3\left( {N+3R} \right)=8A+3C+G;$
\item $3G+8H+9C=3R+8A+9N;$
\item $4G+14H+17C=9R+14A+12N;$
\item $5G+24H+31C=21R+24A+15N.$
\end{enumerate}
\end{multicols}
\end{remark}

\subsection{Reverse Inequalities}

We observe from the above results that the first four inequalities appearing
in pyramid are equal with some multiplicative constants. The other four
inequalities satisfies reverse inequalities given by
\begin{align}
& 1. \quad D_{W_{6} W_{5} }^{11} \le D_{W_{6} W_{4} }^{12} \le D_{W_{6} W_{3} }^{13} \le D_{W_{6} W_{2} }^{14} \le D_{W_{6} W_{1} }^{15} \le \notag\\
& \hspace{25pt} \le \textstyle{{14} \over {13}}D_{W_{6} W_{2} }^{14} \le \textstyle{6 \over
5}D_{W_{6} W_{3} }^{13} \le \textstyle{{10} \over 7}D_{W_{6} W_{4} }^{12}
\le 2D_{W_{6} W_{5} }^{11};\notag\\\notag\\
& 2. \quad D_{W_{7} W_{6} }^{16} \le D_{W_{7} W_{5} }^{17} \le D_{W_{7} W_{4} }^{18} \le D_{W_{7} W_{3} }^{19} \le D_{W_{7} W_{2} }^{20} \le D_{W_{7} W_{1} }^{21} \le \notag\\
& \hspace{25pt} \le \textstyle{{28} \over {27}}D_{W_{7} W_{2} }^{20} \le \textstyle{{12}
\over {11}}D_{W_{7} W_{3} }^{19} \le \textstyle{{20} \over {17}}D_{W_{7}
W_{4} }^{18} \le \textstyle{4 \over 3}D_{W_{7} W_{5} }^{17} \le 2D_{W_{7}
W_{6} }^{16} ;\notag\\\notag\\
& 3. \quad D_{W_{8} W_{7} }^{22} \le D_{W_{8} W_{6} }^{23} \le D_{W_{8} W_{5} }^{24} \le D_{W_{8} W_{4} }^{25} \le D_{W_{8} W_{3} }^{26} \le D_{W_{8} W_{2} }^{27} \le D_{W_{8} W_{1} }^{28} \le \notag\\
& \hspace{25pt} \le \textstyle{{42} \over {41}}D_{W_{8} W_{2} }^{27} \le \textstyle{{18}
\over {17}}D_{W_{8} W_{3} }^{26} \le \textstyle{{10} \over 9}D_{W_{8} W_{4}
}^{25} \le \textstyle{6 \over 5}D_{W_{8} W_{5} }^{24} \le \textstyle{3 \over
2}D_{W_{8} W_{6} }^{23} \le 3D_{W_{8} W_{7} }^{22};\notag\\\notag\\
& 4.\quad D_{W_{9} W_{8} }^{29} \le D_{W_{9} W_{7} }^{30} \le D_{W_{9} W_{6} }^{31} \le D_{W_{9} W_{5} }^{32} \le D_{W_{9} W_{4} }^{33} \le D_{W_{9} W_{3} }^{34} \le \notag\\
& \hspace{25pt} \le D_{W_{9} W_{2} }^{35} \le D_{W_{9} W_{1} }^{36} \le \textstyle{{56} \over {55}}D_{W_{9} W_{2} }^{35} \le \textstyle{{24} \over {23}}D_{W_{9} W_{3} }^{34} \le \textstyle{{40} \over {37}}D_{W_{9}
W_{4} }^{33} \le \notag\\
& \hspace{35pt} \le \textstyle{8 \over 7}D_{W_{9} W_{5} }^{32} \le \textstyle{4
\over 3}D_{W_{9} W_{6} }^{31} \le 2D_{W_{9} W_{7} }^{30} \le 4D_{W_{9} W_{8}
}^{29} .\notag
\end{align}

\begin{remark} It is interesting to observe that in first and second case there is difference of only 2 in between first and last elements. While, in the third case is of 3 and finally, in the forth case is of 4.
\end{remark}

\subsection{Second Stage}

In this stage we shall bring inequalities based on measures arising due to
first stage. The above 27 parts generate some new measures given by
\begin{equation}
\label{eq27}
V_{t} (P\vert \vert Q):=\sum\limits_{i=1}^n {q_{i} f_{V_{t} } } \left(
{\frac{p_{i} }{q_{i} }} \right),
\quad
t=1,2,...,14,
\end{equation}
where $f_{V_{t} } (x)$, $t=1,2,...,14$ are as given by (\ref{eq13})-(\ref{eq26})
respectively. In all the cases we have $f_{V_{t} } (1)=0$, $t=1,2,...,14$.
By the application of Lemma 1.1, we can say that the above 14 measures are
convex. We shall try to connect 14 measures given in (\ref{eq27}) through
inequalities.

\begin{theorem} The following inequalities hold:
\begin{equation}
\label{eq28}
V_{1} \le \textstyle{1 \over 8}V_{3} \le \left\{ {\begin{array}{l}
 \textstyle{1 \over 2}V_{4} \le \textstyle{1 \over {16}}V_{7} \le
\textstyle{1 \over 8}V_{8} \\
 \textstyle{1 \over {36}}V_{6} \le \textstyle{1 \over {128}}V_{10} \\
 \end{array}} \right\}\le \textstyle{1 \over {72}}V_{11} \le \textstyle{1
\over {48}}V_{13} \le \textstyle{1 \over {16}}V_{14}
\end{equation}
and
\begin{equation}
\label{eq29}
V_{2} \le \textstyle{1 \over 4}V_{5} \le \textstyle{1 \over {16}}V_{9} \le
\textstyle{1 \over 8}V_{12} .
\end{equation}
\end{theorem}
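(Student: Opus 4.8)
The plan is to follow the ``second approach'' from the proof of Theorem~2.1: reduce each covering relation in (\ref{eq28}) and (\ref{eq29}) to a pointwise inequality between the generating kernels $f_{V_t}$, and then dispatch that inequality by elementary algebra. The reduction itself is immediate: since $V_t(P\vert\vert Q)=\sum_{i=1}^{n}q_i\,f_{V_t}(p_i/q_i)$ and every $q_i>0$, a bound $\beta\,f_{V_V}(x)\ge\alpha\,f_{V_U}(x)$ valid for all $x>0$ gives at once $\beta\,V_V\ge\alpha\,V_U$; and convexity of the $V_t$ together with $f_{V_t}(1)=0$ is already guaranteed by Lemma~1.1 and the explicit forms (\ref{eq13})--(\ref{eq26}). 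So the theorem reduces to the ten kernel inequalities along the Hasse diagram of (\ref{eq28}) and the three along that of (\ref{eq29}).

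First I would record the common shape of the kernels: from (\ref{eq13})--(\ref{eq26}) each $f_{V_t}(x)$ equals $(\sqrt x-1)^{2m_t}$ times a ratio whose numerator is a polynomial in $\sqrt x$ with nonnegative coefficients and whose denominator is a positive monomial times a power of $(x+1)$. Then, for a given edge $\alpha V_U\le\beta V_V$, I would clear the common positive denominator, pull out the lowest power of $(\sqrt x-1)^2$ present in both terms, and be left with the claim that a residual polynomial $R_{UV}(\sqrt x)$ is $\ge 0$ for all $x>0$. Each such $R_{UV}$ turns out to be palindromic, so the substitution $v=\sqrt x+1/\sqrt x\in[2,\infty)$ rewrites it as a low-degree polynomial $\widetilde R_{UV}(v)$, and one checks $\widetilde R_{UV}(v)\ge 0$ for $v\ge 2$. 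For the short edges ($V_1\le\tfrac18V_3$, $\tfrac18V_3\le\tfrac12V_4$, $\tfrac12V_4\le\tfrac1{16}V_7$, $\tfrac1{16}V_7\le\tfrac18V_8$, $V_2\le\tfrac14V_5$) the difference $\beta f_{V_V}-\alpha f_{V_U}$ collapses to a positive constant times $(\sqrt x-1)^{2m}$ over a monomial, so nonnegativity is visible; for the longer edges $\widetilde R_{UV}(v)$ either has all coefficients nonnegative or carries a factor $(v-2)$ (or $(v-2)^2$) --- reflecting the equality case $x=1$, i.e.\ $v=2$ --- with a cofactor of nonnegative coefficients. Chaining the verified edge inequalities along the two diagrams then yields (\ref{eq28}) and (\ref{eq29}).

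The constants $\alpha,\beta$ are not at one's disposal: along each chain consecutive kernels vanish to the same order at $x=1$ (order $6$ in the chain of (\ref{eq28}), order $8$ in that of (\ref{eq29})), so $\alpha/\beta$ is forced to equal $\lim_{x\to1}f_{V_U}(x)/f_{V_V}(x)$, the ratio of the leading (lowest-degree) parts --- equivalently, by Lemma~1.2 applied on $(0,\infty)\setminus\{1\}$, the supremum of $f''_{V_U}/f''_{V_V}$, attained in the limit $x\to1$. Computing these limiting ratios from (\ref{eq13})--(\ref{eq26}) is exactly how one recovers the numbers $\tfrac18,\tfrac12,\tfrac1{16},\tfrac1{72},\tfrac1{48},\tfrac14,\ldots$ that appear in the statement.

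The argument is entirely computational, and the main obstacle is bookkeeping rather than insight: there are thirteen polynomial nonnegativities of steadily increasing degree (up to degree $8$ in $\sqrt x$, i.e.\ degree $4$ in $v$, on the edges touching $V_9,V_{10},V_{11},V_{13},V_{14}$), and one must keep the numerical normalization of every $f_{V_t}$ and every $\alpha,\beta$ in lockstep so that each residual vanishes at $v=2$ and factors as asserted. In particular the edges $\tfrac14V_5\le\tfrac1{16}V_9$ and $\tfrac1{48}V_{13}\le\tfrac1{16}V_{14}$ fix the numerical factors hidden in $f_{V_9}$ and $f_{V_{14}}$ and should be cross-checked against (\ref{eq21}) and (\ref{eq26}). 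Beyond Lemmas~1.1--1.2 and the kernel reduction no new idea is needed, and a computer-algebra check of the high-degree factorizations makes the verification routine.
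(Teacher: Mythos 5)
Your proposal is correct and takes essentially the same route as the paper: the paper proves each edge of the two chains by the ``second approach'' of Theorem 2.1, reading off the constant from the value of ${f}''_{V_U}/{f}''_{V_V}$ at $x=1$ and then exhibiting the kernel difference $\beta f_{V_V}-\alpha f_{V_U}$ explicitly as a manifestly nonnegative function of $x$ (these residuals are precisely the measures $U_1,\dots,U_{11}$ of (\ref{eq30})--(\ref{eq40}), which feed the third stage). The only cosmetic difference is that you certify nonnegativity of each residual via the palindromic substitution $v=\sqrt x+1/\sqrt x$, whereas the paper factors every difference completely into a power of $\left(\sqrt x-1\right)^{2}$ times a polynomial in $\sqrt x$ with nonnegative coefficients over a positive denominator.
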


\begin{proof} We shall prove the above theorem following the similar lines
of Theorem 2.1. Since, we need the second derivatives of the functions given
by (\ref{eq13})-(\ref{eq26}) to prove the theorem, here below are their values:

\begin{align}
& {f}''_{V_{1} } (x)=\frac{\left( {\sqrt x -1} \right)^{4}\left(
{\begin{array}{l}
 3x^{3}+12x^{5/2}+25x^{2}+ \\
 +40x^{3/2}+25x+12\sqrt x +3 \\
 \end{array}} \right)}{4x^{5/2}\left( {x+1} \right)^{3}},\notag\\
& {f}''_{V_{2} } (x)=\frac{2\left( {\sqrt x -1} \right)^{6}\left(
{\begin{array}{l}
 x^{3}+3x^{5/2}+6x^{2}+ \\
 +8x^{3/2}+6x+3\sqrt x +1 \\
 \end{array}} \right)}{x^{3}\left( {x+1} \right)^{3}},\notag\\
& {f}''_{V_{3} } (x)=\frac{2\left( {\sqrt x -1} \right)^{4}\left(
{\begin{array}{l}
 x^{4}+4x^{7/2}+13x^{3}+24x^{5/2}+ \\
 +36x^{2}+24x^{3/2}+13x+4\sqrt x +1 \\
 \end{array}} \right)}{x^{3}\left( {x+1} \right)^{3}},]\notag\\
& {f}''_{V_{4} } (x)=\frac{\left( {\sqrt x -1} \right)^{4}\left( {4x+7\sqrt x
+4} \right)}{4x^{3}},\notag\\
& {f}''_{V_{5} } (x)=\frac{\left( {\sqrt x -1} \right)^{6}\left(
{\begin{array}{l}
 15x^{4}+42x^{7/2}+108x^{3}+174x^{5/2}+ \\
 +218x^{2}+174x^{3/2}+108x+42\sqrt x +15 \\
 \end{array}} \right)}{4x^{7/2}\left( {x+1} \right)^{3}},\notag\\
 & {f}''_{V_{6} } (x)=\frac{5\left( {\sqrt x -1} \right)^{4}\left(
{\begin{array}{l}
 3x^{5}+12x^{9/2}+39x^{4}+96x^{7/2}+ \\
 +166x^{3}+232x^{5/2}+166x^{2}+ \\
 +96x^{3/2}+39x+12\sqrt x +3 \\
 \end{array}} \right)}{4x^{7/2}\left( {x+1} \right)^{3}},\notag\\
& {f}''_{V_{7} } (x)=\frac{15\left( {\sqrt x -1} \right)^{4}\left(
{x^{2}+4x^{3/2}+x^{2}+6x+4\sqrt x +1} \right)}{4x^{7/2}},\notag\\
& {f}''_{V_{8} } (x)=\frac{\left( {\sqrt x -1} \right)^{4}\left(
{15x^{2}+28x^{3/2}+34x+28\sqrt x +15} \right)}{4x^{7/2}},\notag\\
& {f}''_{V_{9} } (x)=\frac{\left( {\sqrt x +1} \right)^{2}\left( {\sqrt x -1}
\right)^{6}\left( {\begin{array}{l}
 6x^{4}+9x^{7/2}+32x^{3}+35x^{5/2}+ \\
 +60x^{2}+35x^{3/2}+32x^{2}+9\sqrt x +6 \\
 \end{array}} \right)}{x^{4}\left( {x+1} \right)^{3}}, \notag\\
& {f}''_{V_{10} } (x)=\frac{2\left( {\sqrt x -1} \right)^{4}\left(
{\begin{array}{l}
 3x^{6}+12x^{11/2}+40x^{5}+100x^{9/2}+217x^{4}+ \\
 +352x^{7/2}+472x^{3}+352x^{5/2}+217x^{2}+ \\
 +100x^{3/2}+40x+12\sqrt x +3 \\
 \end{array}} \right)}{x^{4}\left( {x+1} \right)^{3}},\notag\\
& {f}''_{V_{11} } (x)=\frac{2\left( {\sqrt x -1} \right)^{4}\left(
{\begin{array}{l}
 3x^{3}+12x^{5/2}+31x^{2}+ \\
 +43x^{3/2}+31x+12\sqrt x +3 \\
 \end{array}} \right)}{x^{4}}\notag\\
& {f}''_{V_{12} } (x)=\frac{\left( {\sqrt x -1} \right)^{6}\left(
{12x^{2}+27x^{3/2}+34x+27\sqrt x +12} \right)}{2x^{4}},\notag\\
& {f}''_{V_{13} } (x)=\frac{2\left( {\sqrt x -1} \right)^{4}\left(
{\begin{array}{l}
 3x^{3}+12x^{5/2}+19x^{2}+ \\
 +22x^{3/2}+19x+12\sqrt x +3 \\
 \end{array}} \right)}{x^{4}}\notag
\intertext{and}
& {f}''_{V_{14} } (x)=\frac{\left( {\sqrt x -1} \right)^{4}\left(
{\begin{array}{l}
 6x^{3}+9x^{5/2}+10x^{2}+ \\
 +10x^{3/2}+10x+9\sqrt x +6 \\
 \end{array}} \right)}{x^{4}}.\notag
\end{align}

We shall prove the above theorem by parts. In view of procedure used in
Theorem 2.1, we shall write the proof of each part in very summarized way.

\begin{enumerate}
\item \textbf{For }$\bf{V_{1} (a,b)\le \textstyle{1 \over 8}V_{3} (a,b)} $\textbf{: }Let us consider a function $g_{V_{1} \mathunderscore V_{3} } (x)={{f}''_{V_{1} } (x)} \mathord{\left/ {\vphantom {{{f}''_{V_{1} } (x)} {{f}''_{V_{3} } (x)}}} \right. \kern-\nulldelimiterspace} {{f}''_{V_{3} } (x)}$. After simplifications, we have
\[
g_{V_{1} \mathunderscore V_{3} } (x)=\frac{x\left( {\begin{array}{l}
 3x^{3}+12x^{5/2}+25x^{2}+ \\
 +40x^{3/2}+25x+12\sqrt x +3 \\
 \end{array}} \right)}{8\left( {\begin{array}{l}
 24x^{3}+\sqrt x +36x^{5/2}+13x^{3/2}+ \\
 +24x^{2}+4x+13x^{7/2}+x^{9/2}+4x^{4} \\
 \end{array}} \right)},
\]
\[
\beta_{V_{1} \mathunderscore V_{3} } =g_{V_{1} \mathunderscore V_{3} }
(1)=\frac{1}{8}
\]
and
\[
\textstyle{1 \over 8}V_{3} (a,b)-V_{1} (a,b)=\frac{1}{8}U_{1}
(a,b)=\frac{1}{8}b{\kern 1pt}f_{U_{1} } \left( {\frac{a}{b}} \right),
\]
where
\begin{equation}
\label{eq30}
f_{U_{1} } (x)=\frac{\left( {\sqrt x -1} \right)^{8}}{x\left( {x+1}
\right)}>0,
\quad
\forall x>0,\,\,x\ne 1.
\end{equation}

\item \textbf{For }$\bf{V_{3} (a,b)\le 4V_{4} (a,b)}$\textbf{: }Let us consider a function $g_{V_{3} \mathunderscore V_{4} } (x)={{f}''_{V_{3} } (x)} \mathord{\left/ {\vphantom {{{f}''_{V_{3} } (x)} {{f}''_{V_{4} } (x)}}} \right. \kern-\nulldelimiterspace} {{f}''_{V_{4} } (x)}$. After simplifications, we have
\[
g_{V_{3} \mathunderscore V_{4} } (x)=\frac{4\left( {\begin{array}{l}
 24x^{3}+\sqrt x +36x^{5/2}+13x^{3/2}+ \\
 +24x^{2}+4x+13x^{7/2}+x^{9/2}+4x^{4} \\
 \end{array}} \right)}{\left( {4x^{3/2}+7x+4\sqrt x } \right)\left( {x+1}
\right)^{3}},
\]
\[
\beta_{V_{3} \mathunderscore V_{4} } =g_{V_{3} \mathunderscore V_{4} }
(1)=4
\]
and
\[
4V_{4} (a,b)-V_{3} (a,b)=3U_{1} (a,b).
\]

\item \textbf{For }$\bf{V_{4} (a,b)\le \textstyle{1 \over 8}V_{7} (a,b)}$\textbf{: }Let us consider a function $g_{V_{4} \mathunderscore V_{7} } (x)={{f}''_{V_{4} } (x)} \mathord{\left/ {\vphantom {{{f}''_{V_{4} } (x)} {{f}''_{V_{7} } (x)}}} \right. \kern-\nulldelimiterspace} {{f}''_{V_{7} } (x)}$. After simplifications, we have
\[
g_{V_{4} \mathunderscore V_{7} } (x)=\frac{2\left( {4x^{3/2}+7x+4\sqrt x }
\right)}{15\left( {4\sqrt x +6x+1+x^{2}+4x^{3/2}} \right)},
\]
\[
\beta_{V_{4} \mathunderscore V_{7} } =g_{V_{4} \mathunderscore V_{7} }
(1)=\frac{1}{8}
\]
and
\[
\frac{1}{8}V_{7} (a,b)-V_{4} (a,b)=\frac{1}{8}U_{2} (a,b)=\frac{1}{8}b{\kern
1pt}f_{U_{2} } \left( {\frac{a}{b}} \right),
\]
where
\begin{equation}
\label{eq31}
f_{U_{2} } (x)=\frac{\left( {\sqrt x -1} \right)^{8}}{x^{3/2}}>0,
\quad
\forall x>0,\,\,x\ne 1.
\end{equation}

\item \textbf{For }$\bf{V_{3} (a,b)\le \textstyle{2 \over 9}V_{6} (a,b)} $\textbf{: }Let us consider a function $g_{V_{3} \mathunderscore V_{6} } (x)={{f}''_{V_{3} } (x)} \mathord{\left/ {\vphantom {{{f}''_{V_{3} } (x)} {{f}''_{V_{6} } (x)}}} \right. \kern-\nulldelimiterspace} {{f}''_{V_{6} } (x)}$. After simplifications, we have
\[
g_{V_{3} \mathunderscore V_{6} } (x)=\frac{8\left( {\begin{array}{l}
 24x^{3}+\sqrt x +36x^{5/2}+13x^{3/2}+ \\
 +24x^{2}+4x+13x^{7/2}+x^{9/2}+4x^{4} \\
 \end{array}} \right)}{5\left( {\begin{array}{l}
 166x^{2}+96x^{3/2}+12\sqrt x +39x+3+3x^{5}+ \\
 +232x^{5/2}+12x^{9/2}+96x^{7/2}+166x^{3}+39x^{4} \\
 \end{array}} \right)},
\]
\[
\beta_{V_{3} \mathunderscore V_{6} } =g_{V_{3} \mathunderscore V_{6} }
(1)=\frac{2}{9}
\]
and
\[
\frac{2}{9}V_{6} (a,b)-V_{3} (a,b)=\frac{1}{9}U_{3} (a,b)=\frac{1}{9}b{\kern
1pt}f_{U_{3} } \left( {\frac{a}{b}} \right),
\]
where
\begin{equation}
\label{eq32}
f_{U_{3} } (x)=\frac{\left( {\sqrt x -1} \right)^{8}\left( {2x+7\sqrt x +2}
\right)}{x^{3/2}\left( {x+1} \right)}>0,
\quad
\forall x>0,\,\,x\ne 1.
\end{equation}

\item \textbf{For }$\bf{V_{6} (a,b)\le \textstyle{9 \over {32}}V_{10} (a,b)}$\textbf{: }Let us consider a function $g_{V_{6} \mathunderscore V_{10} } (x)={{f}''_{V_{6} } (x)} \mathord{\left/ {\vphantom {{{f}''_{V_{6} } (x)} {{f}''_{V_{10} } (x)}}} \right. \kern-\nulldelimiterspace} {{f}''_{V_{10} } (x)}$. After simplifications, we have
\[
g_{V_{6} \mathunderscore V_{10} } (x)=\frac{5x\left( {\begin{array}{l}
 166x^{2}+96x^{3/2}+12\sqrt x +39x+3+3x^{5}+ \\
 +232x^{5/2}+12x^{9/2}+96x^{7/2}+166x^{3}+39x^{4} \\
 \end{array}} \right)}{8\left( {\begin{array}{l}
 12x^{6}+100x^{2}+40x^{3/2}+3\sqrt x +12x+ \\
 +100x^{5}+217x^{5/2}+217x^{9/2}+472x^{7/2}+ \\
 +352x^{3}+352x^{4}+40x^{11/2}+3x^{13/2} \\
 \end{array}} \right)},
\]
This gives $\beta_{V_{6} \mathunderscore V_{10} } =g_{V_{6} \mathunderscore
V_{10} } (1)=\frac{9}{32}$. Let us consider now,

and
\[
\frac{9}{32}V_{10} (a,b)-V_{6} (a,b)=\frac{1}{32}U_{4}
(a,b)=\frac{1}{32}b{\kern 1pt}f_{U_{4} } \left( {\frac{a}{b}} \right),
\]
where
\begin{equation}
\label{eq33}
f_{U_{4} } (x)=\frac{\left( {\sqrt x -1} \right)^{8}\left(
{9x^{2}+40x^{3/2}+86x+40\sqrt x +9} \right)}{x^{2}\left( {x+1} \right)}>0,
\quad
\forall x>0,\,\,x\ne 1.
\end{equation}

\item \textbf{For }$\bf{V_{10} (a,b)\le \textstyle{{16} \over 9}V_{11} (a,b)}$\textbf{: }Let us consider a function $g_{V_{10} \mathunderscore V_{11} } (x)={{f}''_{V_{10} } (x)} \mathord{\left/ {\vphantom {{{f}''_{V_{10} } (x)} {{f}''_{V_{11} } (x)}}} \right. \kern-\nulldelimiterspace} {{f}''_{V_{11} } (x)}$. After simplifications, we have
\[
g_{V_{10} \mathunderscore V_{11} } (x)=\frac{\left( {\begin{array}{l}
 12x^{6}+100x^{2}+40x^{3/2}+3\sqrt x +12x+ \\
 +100x^{5}+217x^{5/2}+217x^{9/2}+472x^{7/2}+ \\
 +352x^{3}+352x^{4}+40x^{11/2}+3x^{13/2} \\
 \end{array}} \right)}{\left( {\begin{array}{l}
 43x^{2}+31x^{3/2}+3\sqrt x + \\
 +12x+31x^{5/2}+3x^{7/2}+12x^{3} \\
 \end{array}} \right)\left( {x+1} \right)^{3}},
\]
\[
\beta_{V_{10} \mathunderscore V_{11} } =g_{V_{10} \mathunderscore V_{11} }
(1)=\frac{16}{9}
\]
and
\[
\frac{16}{9}V_{11} (a,b)-V_{10} (a,b)=\frac{7}{9}U_{5}
(a,b)=\frac{7}{9}b{\kern 1pt}f_{U_{5} } \left( {\frac{a}{b}} \right),
\]
where
\begin{equation}
\label{eq34}
f_{U_{5} } (x)=\frac{\left( {\sqrt x -1} \right)^{8}\left(
{x^{2}+8x^{3/2}+38x+8\sqrt x +1} \right)}{x^{2}\left( {x+1} \right)}>0,
\quad
\forall x>0,\,\,x\ne 1.
\end{equation}
\item \textbf{For }$\bf{V_{6} (a,b)\le \textstyle{9 \over 4}V_{7} (a,b)}$\textbf{: }Let us consider a function $g_{V_{6} \mathunderscore V_{7} } (x)={{f}''_{V_{6} } (x)} \mathord{\left/ {\vphantom {{{f}''_{V_{6} } (x)} {{f}''_{V_{7} } (x)}}} \right. \kern-\nulldelimiterspace} {{f}''_{V_{7} } (x)}$. After simplifications, we have
\[
g_{V_{6} \mathunderscore V_{7} } (x)=\frac{\left( {\begin{array}{l}
 166x^{2}+96x^{3/2}+12\sqrt x +39x+3+3x^{5}+ \\
 +232x^{5/2}+12x^{9/2}+96x^{7/2}+166x^{3}+39x^{4} \\
 \end{array}} \right)}{3\left( {4\sqrt x +6x+1+x^{2}+4x^{3/2}} \right)\left(
{x+1} \right)^{3}},
\]
\[
\beta_{V_{6} \mathunderscore V_{7} } =g_{V_{6} \mathunderscore V_{7} }
(1)=\frac{9}{4}
\]
and
\[
\frac{9}{4}V_{7} (a,b)-V_{6} (a,b)=\frac{5}{4}U_{6} (a,b)=\frac{5}{4}b{\kern
1pt}f_{U_{6} } \left( {\frac{a}{b}} \right),
\]
where
\begin{equation}
\label{eq35}
f_{U_{6} } (x)=\frac{\left( {\sqrt x -1} \right)^{8}\left( {x+8\sqrt x +1}
\right)}{x^{3/2}\left( {x+1} \right)}>0,
\quad
\forall x>0,\,\,x\ne 1.
\end{equation}

\item \textbf{For }$\bf{V_{7} (a,b)\le 2\,V_{8} (a,b)}$\textbf{: }Let us consider a function $g_{V_{7} \mathunderscore V_{8} } (x)={{f}''_{V_{7} } (x)} \mathord{\left/ {\vphantom {{{f}''_{V_{7} } (x)} {{f}''_{V_{8} } (x)}}} \right. \kern-\nulldelimiterspace} {{f}''_{V_{8} } (x)}$. After simplifications, we have
\[
g_{V_{7} \mathunderscore V_{8} } (x)=\frac{15\left( {4\sqrt x
+6x+1+x^{2}+4x^{3/2}} \right)}{15x^{2}+28x^{3/2}+34x+28\sqrt x +15},
\]
\[
\beta_{V_{7} \mathunderscore V_{8} } =g_{V_{7} \mathunderscore V_{8} }
(1)=2
\]
and
\[
2V_{8} (a,b)-V_{7} =U_{2} (a,b).
\]

\item \textbf{For }$\bf{V_{8} (a,b)\le \textstyle{1 \over 9}\,V_{11} (a,b)}$\textbf{: }Let us consider a function $g_{V_{8} \mathunderscore V_{11} } (x)={{f}''_{V_{8} } (x)} \mathord{\left/ {\vphantom {{{f}''_{V_{8} } (x)} {{f}''_{V_{11} } (x)}}} \right. \kern-\nulldelimiterspace} {{f}''_{V_{11} } (x)}$. After simplifications, we have
\[
g_{V_{8} \mathunderscore V_{11} } (x)=\frac{x\left(
{15x^{2}+28x^{3/2}+34x+28\sqrt x +15} \right)}{8\left( {\begin{array}{l}
 43x^{2}+31x^{3/2}+3\sqrt x + \\
 +12x+31x^{5/2}+3x^{7/2}+12x^{3} \\
 \end{array}} \right)},
\]
\[
\beta_{V_{8} \mathunderscore V_{11} } =g_{V_{8} \mathunderscore V_{11} }
(1)=\frac{1}{9}
\]
and
\[
4V_{2} (a,b)-V_{4} (a,b)=\frac{1}{9}U_{7} (a,b)=\frac{1}{9}b{\kern
1pt}f_{U_{7} } \left( {\frac{a}{b}} \right),
\]
where
\begin{equation}
\label{eq36}
f_{U_{7} } (x)=\frac{\left( {\sqrt x -1} \right)^{8}\left( {x-\sqrt x +1}
\right)}{x^{2}}>0,
\quad
\forall x>0,\,\,x\ne 1.
\end{equation}

\item \textbf{For }$\bf{V_{11} (a,b) \le \textstyle{3 \over 2}\,V_{13} (a,b) }$\textbf{: }Let us consider a function $g_{V_{11} \mathunderscore V_{13} } (x)={{f}''_{V_{11} } (x)} \mathord{\left/ {\vphantom {{{f}''_{V_{11} } (x)} {{f}''_{V_{13} } (x)}}} \right. \kern-\nulldelimiterspace} {{f}''_{V_{13} } (x)}$. After simplifications, we have
\[
g_{V_{11} \mathunderscore V_{13} } (x)=\frac{\left( {\begin{array}{l}
 43x^{2}+31x^{3/2}+3\sqrt x +12x+ \\
 +31x^{5/2}+3x^{7/2}+12x^{3} \\
 \end{array}} \right)}{\sqrt x \left( {\begin{array}{l}
 3x^{3}+12x^{5/2}+19x^{2}+ \\
 +22x^{3/2}+19x+12\sqrt x +3 \\
 \end{array}} \right)},
\]
\[
\beta_{V_{11} \mathunderscore V_{13} } =g_{V_{11} \mathunderscore V_{13} }
(1)=\frac{3}{2}
\]
and
\[
\frac{3}{2}V_{13} (a,b)-V_{11} (a,b)=\frac{1}{2}U_{8}
(a,b)=\frac{1}{2}b{\kern 1pt}f_{U_{8} } \left( {\frac{a}{b}} \right),
\]
where
\begin{equation}
\label{eq37}
f_{U_{8} } (x)=\frac{\left( {\sqrt x -1} \right)^{8}\left( {x+8\sqrt x +1}
\right)}{x^{2}}>0,
\quad
\forall x>0,\,\,x\ne 1.
\end{equation}

\item \textbf{For }$\bf{V_{13} (a,b)\le 3\,V_{14} (a,b)}$\textbf{: }Let us consider a function $g_{V_{13} \mathunderscore V_{14} } (x)={{f}''_{V_{13} } (x)} \mathord{\left/ {\vphantom {{{f}''_{V_{13} } (x)} {{f}''_{V_{14} } (x)}}} \right. \kern-\nulldelimiterspace} {{f}''_{V_{14} } (x)}$. After simplifications, we have
\[
g_{V_{13} \mathunderscore V_{14} } (x)=\frac{2\left( {\begin{array}{l}
 3x^{3}+12x^{5/2}+19x^{2}+ \\
 +22x^{3/2}+19x+12\sqrt x +3 \\
 \end{array}} \right)}{\left( {\begin{array}{l}
 10x^{3/2}+10x+9\sqrt x +6+ \\
 +10x^{2}+9x^{5/2}+6x^{3} \\
 \end{array}} \right)},
\]
\[
\beta_{V_{13} \mathunderscore V_{14} } =g_{V_{13} \mathunderscore V_{14} }
(1)=3
\]
and
\[
3V_{14} (a,b)-V_{13} (a,b)=2U_{9} (a,b)=2b{\kern 1pt}f_{U_{9} } \left(
{\frac{a}{b}} \right),
\]
where
\begin{equation}
\label{eq38}
f_{U_{9} } (x)=\frac{\left( {\sqrt x -1} \right)^{8}\left( {\sqrt x +1}
\right)^{2}}{x^{2}}>0,
\quad
\forall x>0,\,\,x\ne 1.
\end{equation}

\item \textbf{For }$\bf{V_{2} (a,b)\le \textstyle{1 \over 4}\,V_{5} (a,b)}$\textbf{: }Let us consider a function $g_{V_{2} \mathunderscore V_{5} } (x)={{f}''_{V_{2} } (x)} \mathord{\left/ {\vphantom {{{f}''_{V_{2} } (x)} {{f}''_{V_{5} } (x)}}} \right. \kern-\nulldelimiterspace} {{f}''_{V_{5} } (x)}$. After simplifications, we have
\[
g_{V_{2} \mathunderscore V_{5} } (x)=\frac{8\left(
{3x^{3}+6x^{3/2}+3x+8x^{2}+x^{7/2}+6x^{5/2}+\sqrt x } \right)}{\left(
{\begin{array}{l}
 218x^{2}+174x^{3/2}+42\sqrt x +108x+ \\
 +15+174x^{5/2}+42x^{7/2}+108x^{3}+15x^{4} \\
 \end{array}} \right)},
\]
\[
\beta_{V_{2} \mathunderscore V_{5} } =g_{V_{2} \mathunderscore V_{5} }
(1)=\frac{1}{4}
\]
and
\[
\frac{1}{4}V_{5} (a,b)-V_{2} (a,b)=\frac{1}{4}U_{10}
(a,b)=\frac{1}{4}b{\kern 1pt}f_{U_{10} } \left( {\frac{a}{b}} \right),
\]
where
\begin{equation}
\label{eq39}
f_{U_{10} } (x)=\frac{\left( {\sqrt x -1} \right)^{10}}{x^{3/2}\left( {x+1}
\right)}>0,
\quad
\forall x>0,\,\,x\ne 1.
\end{equation}

\item \textbf{For }$\bf{V_{5} (a,b)\le \textstyle{1 \over 4}\,V_{9} (a,b)}$\textbf{: }Let us consider a function $g_{V_{5} \mathunderscore V_{9} } (x)={{f}''_{V_{5} } (x)} \mathord{\left/ {\vphantom {{{f}''_{V_{5} } (x)} {{f}''_{V_{9} } (x)}}} \right. \kern-\nulldelimiterspace} {{f}''_{V_{9} } (x)}$. After simplifications, we have

\[
g_{V_{4} \mathunderscore V_{2} } (x)=\frac{x\left( {\begin{array}{l}
 218x^{2}+174x^{3/2}+42\sqrt x +108x+ \\
 +15+174x^{5/2}+42x^{7/2}+108x^{3}+15x^{4} \\
 \end{array}} \right)}{4\left( {\sqrt x +1} \right)^{2}\left(
{\begin{array}{l}
 35x^{2}+32x^{3/2}+6\sqrt x +9x+60x^{5/2}+ \\
 +6x^{9/2}+32x^{7/2}+35x^{3}+9x^{4} \\
 \end{array}} \right)},
\]
\[
\beta_{V_{5} \mathunderscore V_{9} } =g_{V_{5} \mathunderscore V_{9} }
(1)=\frac{1}{4}
\]
and
\[
4V_{2} (a,b)-V_{4} (a,b)=\frac{1}{4}U_{11} (a,b)=\frac{1}{4}b{\kern
1pt}f_{U_{11} } \left( {\frac{a}{b}} \right),
\]
where
\begin{equation}
\label{eq40}
f_{U_{11} } (x)=\frac{\left( {\sqrt x +1} \right)^{2}\left( {\sqrt x -1}
\right)^{10}}{x^{2}\left( {x+1} \right)}>0,
\quad
\forall x>0,\,\,x\ne 1.
\end{equation}

\item \textbf{For }$\bf{V_{9} (a,b)\le 2\,V_{12} (a,b)}$\textbf{: }Let us consider a function $g_{V_{9} \mathunderscore V_{12} } (x)={{f}''_{V_{9} } (x)} \mathord{\left/ {\vphantom {{{f}''_{V_{9} } (x)} {{f}''_{V_{12} } (x)}}} \right. \kern-\nulldelimiterspace} {{f}''_{V_{12} } (x)}$. After simplifications, we have
\[
g_{V_{9} \mathunderscore V_{12} } (x)=\frac{2\left( {\sqrt x +1}
\right)^{2}\left( {\begin{array}{l}
 35x^{2}+32x^{3/2}+6\sqrt x +9x+60x^{5/2}+ \\
 +6x^{9/2}+32x^{7/2}+35x^{3}+9x^{4} \\
 \end{array}} \right)}{\left( {x+1} \right)^{3}\left(
{12x^{5/2}+27x^{2}+34x^{3/2}+27x+12\sqrt x } \right)},
\]
\[
\beta_{V_{9} \mathunderscore V_{12} } =g_{V_{9} \mathunderscore V_{12} }
(1)=2
\]
and
\[
2V_{12} (a,b)-V_{9} (a,b)=U_{11} (a,b).
\]
\end{enumerate}

Combining the parts 1-11, we get the proof of the inequalities (\ref{eq28}). The parts 12-14 give the proof of  (\ref{eq29}).
\end{proof}

\subsection{Third Stage}

The proof of above 14 parts give us some new measures. These are given by
\begin{equation}
\label{eq41}
U_{t} (P\vert \vert Q):=\sum\limits_{i=1}^n {q_{i} f_{U_{t} } } \left(
{\frac{p_{i} }{q_{i} }} \right),
\quad
t=1,2,...,11,
\end{equation}
where $f_{U_{t} } (x)$, $t=1,2,...,11$ are as given by (\ref{eq30})-(\ref{eq40})
respectively. In all the cases, we have $f_{U_{t} } (1)=0$, $t=1,2,...,11$.
By the application of Lemma 1.1, we can say that the above 11 measures are
convex. Here below are the second derivatives of the functions
(\ref{eq30})-(\ref{eq40}), applied frequently in next theorem.

\begin{align}
& {f}''_{U_{1} } (x)=\frac{2\left( {\sqrt x -1} \right)^{6}\left( {\sqrt x +1}
\right)^{2}\left( {x^{2}+x^{3/2}+3x+\sqrt x +1} \right)}{x^{3}(x+1)^{3}},\notag\\
& {f}''_{U_{2} } (x)=\frac{\left( {\sqrt x -1} \right)^{6}\left( {15x+26\sqrt
x +15} \right)}{4x^{3}\left( {x+1} \right)^{3}},\notag\\
& {f}''_{U_{3} } (x)=\frac{\left( {\sqrt x -1} \right)^{6}\left(
{\begin{array}{l}
 15x^{4}+54x^{7/2}+144x^{3}+246x^{5/2}+ \\
 +314x^{2}+246x^{3/2}+144x+54\sqrt x +15 \\
 \end{array}} \right)}{2x^{7/2}\left( {x+1} \right)^{3}},\notag\\
& {f}''_{U_{4} } (x)=\frac{2\left( {\sqrt x -1} \right)^{6}\left(
{\begin{array}{l}
 297x+960x^{2}+27+612x^{3/2}+ \\
 +102\sqrt x +1156x^{5/2}+612x^{7/2}+ \\
 +102x^{9/2}+27x^{5}+297x^{4}+960x^{3} \\
 \end{array}} \right)}{x^{4}\left( {x+1} \right)^{3}},\notag\\
& {f}''_{U_{5} } (x)=\frac{2\left( {\sqrt x -1} \right)^{6}\left(
{\begin{array}{l}
 73x+312x^{2}+3+180x^{3/2}+ \\
 +18\sqrt x +312x^{3}+180x^{7/2}+ \\
 +396x^{5/2}+18x^{9/2}+3x^{5}+73x^{4} \\
 \end{array}} \right)}{x^{4}\left( {x+1} \right)^{3}},\notag\\
& {f}''_{U_{6} } (x)=\frac{\left( {\sqrt x -1} \right)^{6}\left(
{\begin{array}{l}
 462x^{3/2}+462x^{5/2}+90\sqrt x +602x^{2}+ \\
 +252x+90x^{7/2}+15x^{4}+252x^{3}+15 \\
 \end{array}} \right)}{4x^{7/2}\left( {x+1} \right)^{3}},\notag\\
 & {f}''_{U_{7} } (x)=\frac{\left( {\sqrt x -1} \right)^{6}\left(
{24x^{2}+9x^{3/2}-10x+9\sqrt x +24} \right)}{4x^{4}},\notag\\
& {f}''_{U_{8} } (x)=\frac{2\left( {\sqrt x -1} \right)^{6}\left(
{3x^{2}+18x^{3/2}+28x+18\sqrt x +3} \right)}{x^{4}},\notag\\
& {f}''_{U_{9} } (x)=\frac{\left( {\sqrt x -1} \right)^{6}\left(
{12x^{2}+27x^{3/2}+34x+27\sqrt x +12} \right)}{2x^{4}},\notag\\
& {f}''_{U_{10} } (x)=\frac{\left( {\sqrt x -1} \right)^{8}\left(
{\begin{array}{l}
 15x^{3}+40x^{5/2}+77x^{2}+ \\
 +96x^{3/2}+77x+40\sqrt x +15 \\
 \end{array}} \right)}{4x^{7/2}\left( {x+1} \right)^{3}}\notag
\intertext{and}
& {f}''_{U_{11} } (x)=\frac{2\left( {\sqrt x -1} \right)^{8}\left(
{\begin{array}{l}
 3x^{4}+9x^{7/2}+22x^{3}+35x^{5/2}+ \\
 +42x^{2}+35x^{3/2}+22x+9\sqrt x +3 \\
 \end{array}} \right)}{x^{4}\left( {x+1} \right)^{3}}.\notag
\end{align}

The theorem below connects only the first nine measures. The other two are
given later.

\begin{theorem} The following inequalities hold:
\begin{equation}
\label{eq42}
U_{1} \le \textstyle{1 \over {10}}U_{6} \le \textstyle{1 \over {11}}U_{3}
\le \left\{ {\begin{array}{l}
 \textstyle{1 \over 2}U_{2} \\
 \textstyle{1 \over {56}}U_{5} \\
 \end{array}} \right\}\le \textstyle{1 \over {20}}U_{8} \le \textstyle{1
\over 8}U_{9} \le \textstyle{1 \over 2}U_{7} .
\end{equation}
\end{theorem}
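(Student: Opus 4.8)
The plan is to reproduce, link by link, the machinery of Theorems 2.1 and 2.2, using throughout the ``second approach'' recommended in the note following part~1 of the proof of Theorem 2.1: since each second derivative $f''_{U_{t}}$ vanishes at $x=1$, Lemma 1.2 cannot be invoked verbatim, but the direct verification that replaced it there works here as well. The chain (\ref{eq42}) is the concatenation of the eight consecutive inequalities
\[
U_{1}\le\tfrac{1}{10}U_{6},\quad U_{6}\le\tfrac{10}{11}U_{3},\quad U_{3}\le\tfrac{11}{2}U_{2},\quad U_{3}\le\tfrac{11}{56}U_{5},
\]
\[
U_{2}\le\tfrac{1}{10}U_{8},\quad U_{5}\le\tfrac{14}{5}U_{8},\quad U_{8}\le\tfrac{5}{2}U_{9},\quad U_{9}\le4\,U_{7},
\]
and I would establish each of them separately. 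For a pair $U_{i},U_{j}$ I would first form the quotient $g(x)=f''_{U_{i}}(x)/f''_{U_{j}}(x)$ out of the explicit second derivatives of $f_{U_{1}},\dots,f_{U_{11}}$ listed just before the theorem. Because each of $f''_{U_{1}},\dots,f''_{U_{9}}$ carries the factor $(\sqrt x-1)^{6}$, that factor, together with the accompanying powers of $x$ and $(x+1)$, cancels, leaving $g$ as a ratio of two polynomials in $\sqrt x$ that are positive on $(0,\infty)$; the constant $\beta:=g(1)$ is then read off at once and equals the rational number displayed above.

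With $\beta$ in hand, the remaining task for each link is the inequality $\beta U_{j}-U_{i}\ge0$. I would verify it directly, exactly as at the two earlier stages: write $\beta U_{j}(a,b)-U_{i}(a,b)=b\,f(a/b)$ and simplify the generating function $f$. Just as the differences of the first and second stages collapsed onto the functions $f_{V_{1}},\dots,f_{V_{14}}$ and $f_{U_{1}},\dots,f_{U_{11}}$, I expect each $f$ here to reduce to the form
\[
f(x)=\frac{(\sqrt x-1)^{2k}\,P(x)}{x^{m}(x+1)^{\ell}},\qquad k\ge5,
\]
with $P$ a polynomial in $\sqrt x$ that is positive on $(0,\infty)$ (and, in most cases, has all of its coefficients positive); then $f(x)>0$ for all $x>0$, $x\ne1$, which proves the link and, as a by-product, exhibits the divergence measures of the next stage. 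It is worth recording at the outset that $f_{U_{t}}(1)=f'_{U_{t}}(1)=0$ and $f''_{U_{t}}(x)\ge0$ for $t=1,\dots,11$, so by Lemma 1.1 every $U_{t}$ is a convex divergence; this is precisely what makes $x=1$ the point at which $\beta$ is sharp, equivalently $\beta=\lim_{x\to1}f''_{U_{i}}(x)/f''_{U_{j}}(x)$.

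The only genuine labour is this symbolic reduction of the eight differences $\beta U_{j}-U_{i}$ to the displayed factored shape, and it is the step where an arithmetic slip is easiest to make: the cancellation of the leading $(\sqrt x-1)^{6}$ in the second-derivative ratios is what pins the constants down to the stated fractions, and one must check that the cofactor $P$ in each difference is genuinely positive (most transparently, that its coefficients, arranged in powers of $\sqrt x$, are nonnegative). Once the eight links are verified --- with the two parallel middle branches $\tfrac{1}{2}U_{2}$ and $\tfrac{1}{56}U_{5}$ each wedged between $\tfrac{1}{11}U_{3}$ and $\tfrac{1}{20}U_{8}$ --- chaining them yields (\ref{eq42}).
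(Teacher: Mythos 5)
Your proposal is correct and follows essentially the same route as the paper: the same eight links with the same constants ($U_{1}\le\tfrac{1}{10}U_{6}$, $U_{6}\le\tfrac{10}{11}U_{3}$, $U_{3}\le\tfrac{11}{2}U_{2}$, $U_{3}\le\tfrac{11}{56}U_{5}$, $U_{2}\le\tfrac{1}{10}U_{8}$, $U_{5}\le\tfrac{14}{5}U_{8}$, $U_{8}\le\tfrac{5}{2}U_{9}$, $U_{9}\le4U_{7}$), each established by reading $\beta$ off the ratio of second derivatives at $x=1$ and then verifying $\beta U_{j}-U_{i}\ge0$ directly; in the paper these differences reduce to multiples of $U_{10}$, $U_{12}$, $U_{13}$, $U_{14}$, all of the factored positive form you predict (with $U_{12}$'s cofactor $11x-2\sqrt x+11$ being the one case where positivity needs the AM--GM observation rather than nonnegative coefficients, as you anticipate). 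The remaining work is only the symbolic simplification you flag.
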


\begin{proof} We shall prove the inequalities (\ref{eq42}) by parts and shall
use the same approach applied in the above theorems. Without specifying, we
shall frequently use the second derivatives ${f}''_{U_{t} } (x), t=1,2,...,11$.

\begin{enumerate}
\item \textbf{For }$\bf{U_{1} (a,b\le \textstyle{1 \over {10}}U_{6} (a,b)}$\textbf{: }Let us consider a function $g_{U_{1} \mathunderscore U_{6} } (x)={{f}''_{U_{1} } (x)} \mathord{\left/ {\vphantom {{{f}''_{U_{1} } (x)} {{f}''_{U_{6} } (x)}}} \right. \kern-\nulldelimiterspace} {{f}''_{U_{6} } (x)}$. After simplifications, we have
\[
g_{U_{1} \mathunderscore U_{6} } (x)=\frac{x\left(
{3x+3x^{3}+8x^{2}+x^{7/2}+6x^{3/2}+6x^{5/2}+\sqrt x } \right)}{\left(
{\begin{array}{l}
 462x^{3/2}+462x^{5/2}+90\sqrt x +602x^{2} \\
 +252x+90x^{7/2}+15x^{4}+252x^{3}+15 \\
 \end{array}} \right)},
\]
\[
\beta_{U_{1} \mathunderscore U_{6} } =g_{U_{1} \mathunderscore U_{6} }
(1)=\frac{1}{10}
\]
and
\[
\frac{1}{10}U_{6} (a,b)-U_{1} (a,b)=\frac{1}{10}U_{10} (a,b).
\]

\item \textbf{For }$\bf{U_{6} (a,b)\le \textstyle{{10} \over {11}}U_{3} (a,b)}$\textbf{: }Let us consider a function $g_{U_{6} \mathunderscore U_{3} } (x)={{f}''_{U_{6} } (x)} \mathord{\left/ {\vphantom {{{f}''_{U_{6} } (x)} {{f}''_{U_{3} } (x)}}} \right. \kern-\nulldelimiterspace} {{f}''_{U_{3} } (x)}$. After simplifications, we have
\[
g_{U_{6} \mathunderscore U_{3} } (x)=\frac{\left( {\begin{array}{l}
 462x^{3/2}+462x^{5/2}+90\sqrt x +602x^{2}+ \\
 +252x+90x^{7/2}+15x^{4}+252x^{3}+15 \\
 \end{array}} \right)}{2\left( {\begin{array}{l}
 246x^{3/2}+246x^{5/2}+54\sqrt x +314x^{2}+ \\
 +144x+54x^{7/2}+144x^{3}+15x^{4}+15 \\
 \end{array}} \right)},
\]
\[
\beta_{U_{6} \mathunderscore U_{3} } =g_{U_{6} \mathunderscore U_{3} }
(1)=\frac{10}{11}
\]
and
\[
\frac{10}{11}U_{3} (a,b)-U_{6} (a,b)=\frac{9}{11}U_{10} (a,b).
\]

\item \textbf{For }$\bf{U_{3} (a,b)\le \textstyle{{11} \over {56}}U_{5} (a,b)}$\textbf{: }Let us consider a function $g_{U_{3} \mathunderscore U_{5} } (x)={{f}''_{U_{3} } (x)} \mathord{\left/ {\vphantom {{{f}''_{U_{3} } (x)} {{f}''_{U_{5} } (x)}}} \right. \kern-\nulldelimiterspace} {{f}''_{U_{5} } (x)}$. After simplifications, we have
\[
g_{U_{3} \mathunderscore U_{5} } (x)=\frac{x\left( {\begin{array}{l}
 246x^{3/2}+246x^{5/2}+54\sqrt x +314x^{2}+ \\
 +144x+54x^{7/2}+144x^{3}+15x^{4}+15 \\
 \end{array}} \right)}{4\left( {\begin{array}{l}
 73x^{3/2}+312x^{5/2}+3\sqrt x +180x^{2}+73x^{9/2}+ \\
 +312x^{7/2}+180x^{4}+396x^{3}+3x^{11/2}+18x^{5}+18x \\
 \end{array}} \right)},
\]
\[
\beta_{U_{3} \mathunderscore U_{5} } =g_{U_{3} \mathunderscore U_{5} }
(1)=\frac{11}{56}
\]
and
\[
\frac{11}{56}U_{5} (a,b)-U_{3} (a,b)=\frac{1}{56}U_{12}
(a,b)=\frac{1}{56}b{\kern 1pt}f_{U_{12} } \left( {\frac{a}{b}} \right),
\]
where
\begin{equation}
\label{eq43}
f_{U_{12} } (x)=\frac{\left( {\sqrt x -1} \right)^{10}\left( {11x-2\sqrt x
+11} \right)}{x^{2}\left( {x+1} \right)}>0,
\quad
\forall x>0,\,\,x\ne 1.
\end{equation}

\item \textbf{For }$\bf{U_{3} (a,b)\le \textstyle{{11} \over 2}U_{2} (a,b)}$\textbf{: }Let us consider a function $g_{U_{3} \mathunderscore U_{2} } (x)={{f}''_{U_{3} } (x)} \mathord{\left/ {\vphantom {{{f}''_{U_{3} } (x)} {{f}''_{U_{2} } (x)}}} \right. \kern-\nulldelimiterspace} {{f}''_{U_{2} } (x)}$. After simplifications, we have
\[
g_{U_{3} \mathunderscore U_{2} } (x)=\frac{2\left( {\begin{array}{l}
 246x^{3/2}+246x^{5/2}+54\sqrt x +314x^{2}+ \\
 +144x+54x^{7/2}+144x^{3}+15x^{4}+15 \\
 \end{array}} \right)}{\left( {15x+26\sqrt x +15} \right)\left( {x+1}
\right)^{3}},
\]
\[
\beta_{U_{3} \mathunderscore U_{2} } =g_{U_{3} \mathunderscore U_{2} }
(1)=\frac{11}{2}
\]
and
\[
\frac{11}{2}U_{2} (a,b)-U_{3} (a,b)=\frac{7}{2}U_{10} (a,b).
\]

\item \textbf{For }$\bf{U_{2} (a,b)\le \textstyle{1 \over {10}}U_{8} (a,b)}$\textbf{: }Let us consider a function $g_{U_{2} \mathunderscore U_{8} } (x)={{f}''_{U_{2} } (x)} \mathord{\left/ {\vphantom {{{f}''_{U_{2} } (x)} {{f}''_{U_{8} } (x)}}} \right. \kern-\nulldelimiterspace} {{f}''_{U_{8} } (x)}$. After simplifications, we have
\[
g_{U_{2} \mathunderscore U_{8} } (x)=\frac{\sqrt x \left( {15x+26\sqrt x
+15} \right)}{8\left( {3x^{2}+28x+18x^{3/2}+18\sqrt x +3} \right)},
\]
This gives $\beta_{U_{2} \mathunderscore U_{8} } =g_{U_{2} \mathunderscore
U_{8} } (1)=\frac{1}{10}$. Let us consider now,

and
\[
\frac{1}{10}U_{8} (a,b)-U_{2} (a,b)=\frac{1}{10}U_{13}
(a,b)=\frac{1}{10}b{\kern 1pt}f_{U_{13} } \left( {\frac{a}{b}} \right),
\]
where
\begin{equation}
\label{eq44}
f_{U_{13} } (x)=\frac{\left( {\sqrt x -1} \right)^{10}}{x^{2}}>0,
\quad
\forall x>0,\,\,x\ne 1.
\end{equation}

\item \textbf{For }$\bf{U_{5} (a,b)\le \textstyle{{14} \over 5}U_{8} (a,b)}$\textbf{: }Let us consider a function $g_{U_{5} \mathunderscore U_{8} } (x)={{f}''_{U_{5} } (x)} \mathord{\left/ {\vphantom {{{f}''_{U_{5} } (x)} {{f}''_{U_{8} } (x)}}} \right. \kern-\nulldelimiterspace} {{f}''_{U_{8} } (x)}$. After simplifications, we have
\[
g_{U_{5} \mathunderscore U_{8} } (x)=\frac{\left( {\begin{array}{l}
 73x^{3/2}+312x^{5/2}+3\sqrt x +180x^{2}+18x+73x^{9/2} \\
 +312x^{7/2}+180x^{4}+396x^{3}+3x^{11/2}+18x^{5} \\
 \end{array}} \right)}{\left( {3x^{5/2}+28x^{3/2}+18x^{2}+18x+3\sqrt x }
\right)\left( {x+1} \right)^{3}},
\]
\[
\beta_{U_{5} \mathunderscore U_{8} } =g_{U_{5} \mathunderscore U_{8} }
(1)=\frac{14}{5}
\]
and
\[
\frac{14}{5}U_{8} (a,b)-U_{5} (a,b)=\frac{9}{5}U_{14}
(a,b)=\frac{9}{5}b{\kern 1pt}f_{U_{14} } \left( {\frac{a}{b}} \right),
\]
where
\begin{equation}
\label{eq45}
f_{U_{14} } (x)=\frac{\left( {\sqrt x -1} \right)^{10}\left( {x+10\sqrt x
+1} \right)}{x^{2}\left( {x+1} \right)}>0,
\quad
\forall x>0,\,\,x\ne 1.
\end{equation}

\item \textbf{For }$\bf{U_{8} (a,b)\le \textstyle{5 \over 2}U_{9} (a,b)}$\textbf{: }Let us consider a function $g_{U_{8} \mathunderscore U_{9} } (x)={{f}''_{U_{8} } (x)} \mathord{\left/ {\vphantom {{{f}''_{U_{8} } (x)} {{f}''_{U_{9} } (x)}}} \right. \kern-\nulldelimiterspace} {{f}''_{U_{9} } (x)}$. After simplifications, we have
\[
g_{U_{8} \mathunderscore U_{9} } (x)=\frac{4\left(
{3x^{2}+18x^{3/2}+28x+18\sqrt x +3} \right)}{12x^{2}+27x^{3/2}+34x+27\sqrt x
+12},
\]
\[
\beta_{U_{8} \mathunderscore U_{9} } =g_{U_{8} \mathunderscore U_{9} }
(1)=\frac{5}{2}
\]
and
\[
\frac{5}{2}U_{9} (a,b)-U_{8} (a,b)=\frac{3}{2}U_{13} (a,b),
\]

\item \textbf{For }$\bf{U_{9} (a,b)\le 4\,U_{7} (a,b)}$\textbf{: }Let us consider a function $g_{U_{9} \mathunderscore U_{7} } (x)={{f}''_{U_{9} } (x)} \mathord{\left/ {\vphantom {{{f}''_{U_{9} } (x)} {{f}''_{U_{7} } (x)}}} \right. \kern-\nulldelimiterspace} {{f}''_{U_{7} } (x)}$. After simplifications, we have
\[
g_{U_{9} \mathunderscore U_{7} } (x)=\frac{2\left(
{12x^{2}+27x^{3/2}+34x+27\sqrt x +12} \right)}{24x^{2}+9x^{3/2}-10x+9\sqrt x
+24},
\]
\[
\beta_{U_{9} \mathunderscore U_{7} } =g_{U_{9} \mathunderscore U_{7} }
(1)=4
\]
and
\[
4U_{7} (a,b)-U_{9} (a,b)=3U_{13} (a,b).
\]
\end{enumerate}
Combining the parts 1-8, we get the proof of the inequalities (\ref{eq42}).
\end{proof}

\subsection{Forth Stage}

Still, we have more measures to compares, i.e., $U_{10} $ to $U_{14} $. This
comparison is given in the theorem below. Here below are the second
derivatives of the functions given by (\ref{eq43})-(\ref{eq45}).
\begin{align}
& {f}''_{U_{12} } (x)=\frac{\left( {\sqrt x -1} \right)^{8}\left(
{\begin{array}{l}
 15x^{3}+40x^{5/2}+77x^{2}+ \\
 +96x^{3/2}+77x+40\sqrt x +15 \\
 \end{array}} \right)}{4x^{7/2}\left( {x+1} \right)^{3}},\notag\\
& {f}''_{U_{13} } (x)=\frac{\left( {\sqrt x -1} \right)^{8}\left(
{\begin{array}{l}
 15x^{3}+40x^{5/2}+77x^{2}+ \\
 +96x^{3/2}+77x+40\sqrt x +15 \\
 \end{array}} \right)}{4x^{7/2}\left( {x+1} \right)^{3}},\notag
\intertext{and}
& {f}''_{U_{14} } (x)=\frac{2\left( {\sqrt x -1} \right)^{8}\left(
{\begin{array}{l}
 3x^{4}+9x^{7/2}+22x^{3}+35x^{5/2}+ \\
 +42x^{2}+35x^{3/2}+22x+9\sqrt x +3 \\
 \end{array}} \right)}{x^{4}\left( {x+1} \right)^{3}}.\notag
\end{align}

\begin{theorem} The following inequalities hold:
\begin{equation}
\label{eq46}
U_{10} \le \textstyle{1 \over {12}}U_{14} \le \textstyle{1 \over 4}U_{11}
\le \textstyle{1 \over 2}U_{13} \le \textstyle{1 \over {20}}U_{12} .
\end{equation}
\end{theorem}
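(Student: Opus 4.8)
The plan is to follow verbatim the template of the proofs of Theorems~2.1, 2.2 and 2.3, establishing (\ref{eq46}) link by link. There are four links,
\[
U_{10} \le \tfrac{1}{12}U_{14},\quad \tfrac{1}{12}U_{14} \le \tfrac14 U_{11},\quad \tfrac14 U_{11} \le \tfrac12 U_{13},\quad \tfrac12 U_{13} \le \tfrac{1}{20}U_{12},
\]
and for each I would use the ``second approach'' that the Note after part~1 of the proof of Theorem~2.1 singles out as the easier one. The raw material is the list of generating functions $f_{U_{10}},f_{U_{11}},f_{U_{12}},f_{U_{13}},f_{U_{14}}$ from (\ref{eq39})--(\ref{eq40}) and (\ref{eq43})--(\ref{eq45}), together with their second derivatives, all of which are already tabulated (${f}''_{U_{10}},{f}''_{U_{11}}$ at the Third Stage, ${f}''_{U_{12}},{f}''_{U_{13}},{f}''_{U_{14}}$ just above the theorem). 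For each ordered pair $(U_i,U_j)$ entering a link one records, exactly as in the earlier parts, the ratio $g_{U_i U_j}(x)={f}''_{U_i}(x)/{f}''_{U_j}(x)$ and the multiplicative constant $\beta_{U_i U_j}=g_{U_i U_j}(1)$.

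The computation that carries the whole theorem is that all four differences collapse to the \emph{same} nonnegative generating function, up to a scalar: after cancelling the common factor $(\sqrt x-1)^{10}$ the remaining bracket reduces in each case to a positive multiple of $x-2\sqrt x+1=(\sqrt x-1)^2$, so that
\begin{align*}
\tfrac{1}{12}f_{U_{14}}(x)-f_{U_{10}}(x) &= \frac{(\sqrt x-1)^{12}}{12\,x^{2}(x+1)},\\
\tfrac14 f_{U_{11}}(x)-\tfrac{1}{12}f_{U_{14}}(x) &= \frac{(\sqrt x-1)^{12}}{6\,x^{2}(x+1)},\\
\tfrac12 f_{U_{13}}(x)-\tfrac14 f_{U_{11}}(x) &= \frac{(\sqrt x-1)^{12}}{4\,x^{2}(x+1)},\\
\tfrac{1}{20}f_{U_{12}}(x)-\tfrac12 f_{U_{13}}(x) &= \frac{(\sqrt x-1)^{12}}{20\,x^{2}(x+1)}.
\end{align*}
Each of these is $>0$ for all $x>0$, $x\ne 1$. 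Writing $f_{U_{15}}(x):=(\sqrt x-1)^{12}/\bigl(x^{2}(x+1)\bigr)$ and $U_{15}(P\vert\vert Q):=\sum_i q_i f_{U_{15}}(p_i/q_i)$, the four identities read $\tfrac{1}{12}U_{14}-U_{10}=\tfrac{1}{12}U_{15}$, $\tfrac14 U_{11}-\tfrac{1}{12}U_{14}=\tfrac16 U_{15}$, $\tfrac12 U_{13}-\tfrac14 U_{11}=\tfrac14 U_{15}$ and $\tfrac{1}{20}U_{12}-\tfrac12 U_{13}=\tfrac{1}{20}U_{15}$; since $U_{15}\ge 0$, concatenating the four links yields (\ref{eq46}). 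As at the start of each earlier stage, one would also note that $f_{U_{15}}(1)=0$ and that $f_{U_{15}}$ is convex, whence by Lemma~1.1 $U_{15}$ is again a convex divergence --- the single measure carried into the next stage.

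The only genuine work is the algebra behind those four simplifications. It is routine but needs care, because the $f_{U_t}$ are not uniform in shape: $f_{U_{10}},f_{U_{11}},f_{U_{12}},f_{U_{14}}$ all carry a factor $x+1$ in the denominator (with $f_{U_{12}}$ an extra $11x-2\sqrt x+11$ and $f_{U_{14}}$ an extra $x+10\sqrt x+1$ in the numerator) whereas $f_{U_{13}}$ does not, and some denominators involve $x^{3/2}$ rather than $x^{2}$. The tidy way to handle all four uniformly is the substitution $t=\sqrt x$: each $f_{U_t}$ then becomes a rational function of $t$ whose denominator is a monomial times $t^{2}+1$ or $(t+1)^{2}$, and in every link the numerator of the difference visibly picks up the extra factor $(t-1)^{2}$, i.e. the exponent of $(\sqrt x-1)$ rises from $10$ to $12$. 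For presentational fidelity one would also display the ratios $g_{U_i U_j}(x)$ and the values $\beta_{U_i U_j}$ as in the earlier parts, but logically only the four nonnegativity statements above are needed.
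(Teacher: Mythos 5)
Your proposal is correct and follows essentially the same route as the paper: the same four links, each reduced to the nonnegativity of the single measure $U_{15}$ with $f_{U_{15}}(x)=(\sqrt x-1)^{12}/\bigl(x^{2}(x+1)\bigr)$, and your four identities are exactly the paper's identities $\tfrac{1}{12}U_{14}-U_{10}=\tfrac{1}{12}U_{15}$, $3U_{11}-U_{14}=2U_{15}$, $2U_{13}-U_{11}=U_{15}$, $\tfrac{1}{10}U_{12}-U_{13}=\tfrac{1}{10}U_{15}$ rescaled to the coefficients in (\ref{eq46}). The only cosmetic difference is that the paper also displays the ratios $g(x)=f''_{U_i}(x)/f''_{U_j}(x)$ and the constants $\beta=g(1)$, which, as you note, are not logically required once the explicit differences are computed.
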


\begin{proof} We shall prove the above theorem by parts.
\begin{enumerate}
\item \textbf{For }$\bf{U_{10} (a,b)\le \textstyle{1 \over {12}}\,U_{14} (a,b)}$\textbf{: }Let us consider a function $g_{U_{10} \mathunderscore U_{14} } (x)={{f}''_{U_{10} } (x)} \mathord{\left/ {\vphantom {{{f}''_{U_{10} } (x)} {{f}''_{U_{14} } (x)}}} \right. \kern-\nulldelimiterspace} {{f}''_{U_{14} } (x)}$. After simplifications, we have
\[
g_{U_{10} \mathunderscore U_{14} } (x)=\frac{x\left( {\begin{array}{l}
 15+77x^{2}+77x+40x^{5/2}+ \\
 +15x^{3}+40\sqrt x +96x^{3/2} \\
 \end{array}} \right)}{8\left( {\begin{array}{l}
 62x^{3/2}+138x^{5/2}+3\sqrt x +112x^{2}+ \\
 24x+3x^{9/2}+62x^{7/2}+24x^{4}+112x^{3} \\
 \end{array}} \right)},
\]
\[
\beta_{U_{10} \mathunderscore U_{14} } =g_{U_{10} \mathunderscore U_{14} }
(1)=\frac{1}{12}
\]
and
\[
\frac{1}{12}U_{14} (a,b)-U_{10} (a,b)=\frac{1}{12}U_{15}
(a,b)=\frac{1}{12}b{\kern 1pt}f_{U_{15} } \left( {\frac{a}{b}} \right).
\]
where
\begin{equation}
\label{eq47}
f_{U_{15} } (x)=\frac{\left( {\sqrt x -1} \right)^{12}}{x^{2}\left( {x+1}
\right)}>0,
\quad
\forall x>0,\,\,x\ne 1.
\end{equation}

\item \textbf{For }$\bf{U_{14} (a,b)\le 3\,U_{11} (a,b)}$\textbf{: }Let us consider a function $g_{U_{14} \mathunderscore U_{11} } (x)={{f}''_{U_{14} } (x)} \mathord{\left/ {\vphantom {{{f}''_{U_{14} } (x)} {{f}''_{U_{11} } (x)}}} \right. \kern-\nulldelimiterspace} {{f}''_{U_{11} } (x)}$. After simplifications, we have
\[
g_{U_{14} \mathunderscore U_{11} } (x)=\frac{\left( {\begin{array}{l}
 62x^{3/2}+138x^{5/2}+3\sqrt x +112x^{2}+ \\
 +24x+3x^{9/2}+62x^{7/2}+24x^{4}+112x^{3} \\
 \end{array}} \right)}{\left( {\begin{array}{l}
 22x^{3/2}+42x^{5/2}+3\sqrt x +35x^{2}+ \\
 +9x+3x^{9/2}+22x^{7/2}+9x^{4}+35x^{3} \\
 \end{array}} \right)},
\]
\[
\beta_{U_{14} \mathunderscore U_{11} } =g_{U_{14} \mathunderscore U_{11} }
(1)=3
\]
and
\[
3U_{11} (a,b)-U_{14} (a,b)=2U_{15} (a,b).
\]

\item \textbf{For }$\bf{U_{11} (a,b)\le 2\,U_{13} (a,b)}$\textbf{: }Let us consider a function $g_{U_{11} \mathunderscore U_{13} } (x)={{f}''_{U_{11} } (x)} \mathord{\left/ {\vphantom {{{f}''_{U_{11} } (x)} {{f}''_{U_{13} } (x)}}} \right. \kern-\nulldelimiterspace} {{f}''_{U_{13} } (x)}$. After simplifications, we have
\[
g_{U_{11} \mathunderscore U_{13} } (x)=\frac{4\left( {\begin{array}{l}
 22x+42x^{2}+3+35x^{3/2}+ \\
 +9\sqrt x +3x^{4}+22x^{3}+9x^{7/2}+35x^{5/2} \\
 \end{array}} \right)}{3\left( {4x+7\sqrt x +4} \right)\left( {x+1}
\right)^{3}},
\]
\[
\beta_{U_{11} \mathunderscore U_{13} } =g_{U_{11} \mathunderscore U_{13} }
(1)=2
\]
and
\[
2U_{13} (a,b)-U_{11} (a,b)=U_{15} (a,b).
\]

\item \textbf{For }$\bf{U_{13} (a,b)\le \textstyle{1 \over {10}}U_{12} (a,b)}$\textbf{: }Let us consider a function $g_{U_{13} \mathunderscore U_{12} } (x)={{f}''_{U_{13} } (x)} \mathord{\left/ {\vphantom {{{f}''_{U_{13} } (x)} {{f}''_{U_{12} } (x)}}} \right. \kern-\nulldelimiterspace} {{f}''_{U_{12} } (x)}$. After simplifications, we have
\[
g_{U_{13} \mathunderscore U_{12} } (x)=\frac{3\left( {4x+7\sqrt x +4}
\right)\left( {x+1} \right)^{3}}{4\left( {\begin{array}{l}
 122x+174x^{2}+33+154x^{3/2}+54\sqrt x + \\
 +33x^{4}+122x^{3}+54x^{7/2}+154x^{5/2} \\
 \end{array}} \right)},
\]
\[
\beta_{U_{13} \mathunderscore U_{12} } =g_{U_{13} \mathunderscore U_{12} }
(1)=\frac{1}{10}
\]
and
\[
\frac{1}{10}U_{12} (a,b)-U_{13} (a,b)=\frac{1}{10}U_{15} (a,b).
\]
\end{enumerate}
\end{proof}

\begin{remark} Interestingly, in all the four cases we are left with only a
single measure, i.e., $U_{15} (a,b)$ given by
\begin{equation}
\label{eq48}
U_{15} (a,b)=\frac{\left( {\sqrt a -\sqrt b } \right)^{12}}{\left( {ab}
\right)^{2}\left( {a+b} \right)},
\quad
a,b>0,
\quad
a\ne b.
\end{equation}
\end{remark}

\subsection{Equivalent Expressions}

The measures appearing in the proof of the Theorems 2.2-2.4 can be written
in terms of the measures appearing in the inequalities (\ref{eq8}). Here below are
equivalent versions of these measures.

\bigskip
\noindent \textbf{$\bullet$ Measures appearing in Theorem 2.2}. We can write
\begin{align}
V_{1} & = {K+26\Delta -48D_{CN} } \notag\\
&\hspace{30pt} = {K+30D_{CN} -26D_{CG} } \notag\\
& \hspace{30pt}= {K+14D_{CG} -30D_{RG} } \notag\\
& \hspace{30pt}= {K+12D_{RG} -28hD_{RG} },\notag\\
V_{2} & =\Psi +64h-4\Delta -8K,\notag\\
V_{3} & = {\Psi +108\Delta -192D_{CN} } \notag\\
& \hspace{30pt} = {\Psi +132D_{CN} -108D_{CG} } \notag\\
& \hspace{30pt}= {\Psi +68D_{CG} -132D_{RG} } \notag\\
& \hspace{30pt}= {\Psi +72D_{RG} -136h},\notag\\
V_{4} & =\Psi +32h-6K,\notag\\
V_{5} & =2\left( {F+6K-4\Delta -3\Psi } \right),\notag\\
V_{6} & =2\left( {F+164\Delta -288D_{CN} } \right)\notag\\
& \hspace{30pt}= 2\left( {F+204D_{CN} -164D_{CG} } \right)\notag\\
& \hspace{30pt}=2 \left( {F+108D_{CG} -204D_{RG} } \right) \notag\\
& \hspace{30pt}=2\left(
{F+120D_{RG} -216h} \right),\notag\\
V_{7} & =2\left( {F-10K+64h} \right),\notag\\
V_{8} & =2\left( {F+2K-2\Psi } \right),\notag\\
V_{9} & =L+16K-16\Delta -8F,\notag\\
V_{10} & = {L+880\Delta -1536D_{CN} } \notag\\
& \hspace{30pt}={L+1104D_{CN} -880D_{CG} } \notag\\
& \hspace{30pt} = {L+592D_{CG} -1104D_{RG} } \notag\\
& \hspace{30pt}= {L+6724D_{RG}-1184h} \notag\\
V_{11} & =L+384h-56K,\notag\\
V_{12} & =L+12\Psi -8K-12F,\notag\\
V_{13} & =L+16K-12\Psi,\notag\\
V_{14} & =L+4\Psi -8F.\notag
\end{align}

\bigskip
\noindent \textbf{$\bullet$ Measures appearing in Theorem 2.3 and 2.4}. We can write
\begin{align}
U_{1} & =\Psi +192D_{CN} -100\Delta -8K,\notag\\
U_{2} & =2\left( {F+14K-64h-4\Psi } \right),\notag\\
U_{3} & =4F+576D_{CN} -316\Delta -9\Psi,\notag\\
U_{4} & =9L+4608D_{CN} -2576\Delta -64F,\notag\\
U_{5} & =\textstyle{1 \over 7}\left( {7L+6144h+13824D_{CN} -896K-7920\Delta }
\right),\notag\\
U_{6} & =\textstyle{2 \over 5}\left( {5F+576h+1152D_{CN} -90K-656\Delta }
\right),\notag\\
U_{7} & =L+384h+36\Psi -92K-18F,\notag\\
U_{8} & =L+160K-36\Psi -768h,\notag\\
U_{9} & =L+12\Psi -8K-12F,\notag\\
U_{10} & =2\left( {F+22K+4\Delta -5\Psi -128h} \right),\notag\\
U_{11} &=L+16\Delta +24\Psi -16F-32K,\notag\\
U_{12} & =\textstyle{1 \over 7}\left( {77L-9856K+67584h-73728D_{CN}
+36752\Delta -1568F+3528\Psi } \right),\notag\\
U_{13} & =L+44\Psi -120K+512h-20F\notag\\
U_{14} & =\textstyle{1 \over 7}\left( {7L-392\Psi +2240K-11776h-7680D_{CN}
+4400\Delta } \right),\notag\\
U_{15} & =\textstyle{1 \over 7}\left( {7L+448\Delta -1456K+9728h-7680D_{CN}
+3728\Delta -168F} \right).\notag
\end{align}

\section{Generating Divergence Measures}

Some of the measures given in Section 2 can be written in generating forms.
Let us see below these generating measures.

\subsection{First Generalization of Triangular Discrimination}

For all $(a,b)\in {\rm R}_{+}^{2} $, let consider the following measures
\begin{equation}
\label{eq49}
\Delta_{t}^{1} (a,b)=\frac{\left( {a-b} \right)^{2}\left( {\sqrt a -\sqrt b
} \right)^{2t}}{\left( {a+b} \right)\left( {\sqrt {ab} } \right)^{t}},\quad
t=0,1,2,3,...
\end{equation}
In particular, we have
\begin{align}
& \Delta_{0}^{1} =\Delta =\frac{\left( {a-b} \right)^{2}}{\left( {a+b}
\right)},\notag\\
& \Delta_{1}^{1} =D_{W_{6} W_{1} }^{15} =K-2\Delta =\frac{\left( {a-b}
\right)^{2}\left( {\sqrt a -\sqrt b } \right)^{2}}{\left( {a+b} \right)\sqrt
{ab} },\notag\\
& \Delta_{2}^{1} =\Psi -4K+4\Delta =\frac{\left( {a-b} \right)^{2}\left(
{\sqrt a -\sqrt b } \right)^{4}}{ab\left( {a+b} \right)}\notag
\intertext{and}
& \Delta_{3}^{1} =V_{5}
=2\left( {F+6K-4\Delta -3\Psi } \right)=\frac{\left( {a-b} \right)^{2}\left(
{\sqrt a -\sqrt b } \right)^{6}}{\left( {ab} \right)^{3/2}\left( {a+b}
\right)}.\notag
\end{align}

The expression (\ref{eq49}) gives first generalization of the measure $\Delta
(a,b)$. Let us prove now its convexity. We can write $\Delta_{t}^{1}
(a,b)=bf_{\Delta_{t}^{1} } (a/b)$, $t\in {\rm N}$, where
\begin{equation}
\label{eq50}
f_{\Delta_{t}^{1} } (x)=\frac{\left( {x-1} \right)^{2}\left( {\sqrt x -1}
\right)^{2t}}{\left( {x+1} \right)\left( {\sqrt x } \right)^{t}}.
\end{equation}

The second order derivative of the function $f_{\Delta_{t}^{1} } (x)$ is
given by
\[
{f}''_{\Delta_{t}^{1} } (x)=\frac{\left( {\sqrt x -1}
\right)^{2t}}{4x^{2}\left( {x+1} \right)^{3}\left( {\sqrt x }
\right)^{t}}\times A_{1} (x,t),
\]
where
\[
A_{1} (x,t)=\left( {\begin{array}{l}
 t\left( {t+2} \right)\left( {x^{4}+1} \right)+2t\left( {2t+1} \right)\sqrt
x \left( {x^{3}+1} \right)+ \\
 +4t\left( {2t+3} \right)x\left( {x^{2}+1} \right)+4\left( {7t^{2}+10t+16}
\right)x^{2} \\
 +2t\left( {6t+11} \right)x^{3/2}\left( {x+1} \right) \\
 \end{array}} \right).
\]

For all $t\ge 0$, $x>0$, $x\ne 1$, we have ${f}''_{\Delta_{t}^{1} } (x)>0$.
Also we have $f_{\Delta_{t}^{1} } (1)=0$. In view of Lemma 1.1, the measure
$\Delta_{t}^{1} (a,b)$ is convex for all $(a,b)\in {\rm R}_{+}^{2} $, $t\in
{\rm N}$.

\bigskip
Now, we shall present exponential representation of the measure (\ref{eq49}) based
on the function given by (\ref{eq50}). Let us consider a linear combination of
convex functions,
\[
f_{\Delta^{1}} (x)=a_{0} f_{\Delta_{0}^{1} } (x)+a_{1} f_{\Delta_{1}^{1}
} (x)+a_{2} f_{\Delta_{2}^{1} } (x)+a_{3} f_{\Delta_{3}^{1} } (x)+...
\]
i.e.,
\[
f_{\Delta^{1}} (x)=a_{0} \frac{(x-1)^{2}}{x+1}+a_{1} \frac{\left( {x-1}
\right)^{2}\left( {\sqrt x -1} \right)^{2}}{\left( {x+1} \right)\sqrt x
}+a_{2} \frac{\left( {x-1} \right)^{2}\left( {\sqrt x -1}
\right)^{4}}{x\left( {x+1} \right)}+
\]
\[
+a_{3} \frac{\left( {x-1} \right)^{2}\left( {\sqrt x -1} \right)^{6}}{\left(
x \right)^{3/2}\left( {x+1} \right)}+...,
\]
where $a_{0} ,a_{1} ,\,a_{2} ,\,a_{3} ,...$are the constants. For simplicity
let us choose,
\[
a_{0} =\frac{1}{0!},\,a_{1} =\frac{1}{1!},\,a_{2} =\frac{1}{2!},\,a_{3}
=\frac{1}{3!},...
\]

Thus we have
\begin{align}
f_{\Delta^{1}}
(x)& =\frac{1}{0!}\frac{(x-1)^{2}}{x+1}+\frac{1}{1!}\frac{\left( {x-1}
\right)^{2}\left( {\sqrt x -1} \right)^{2}}{\left( {x+1} \right)\sqrt x
}+\frac{1}{2!}\frac{\left( {x-1} \right)^{2}\left( {\sqrt x -1}
\right)^{4}}{x\left( {x+1} \right)}+\notag\\
& \hspace{40pt} +\frac{1}{3!}\frac{\left( {x-1} \right)^{2}\left( {\sqrt x -1}
\right)^{6}}{\left( x \right)^{3/2}\left( {x+1} \right)}+...\notag\\
& =\frac{(x-1)^{2}}{x+1}\left[ {\frac{1}{0!}+\frac{1}{1!}\left( {\frac{(\sqrt
x -1)^{2}}{\sqrt x }} \right)^{1}+\frac{1}{2!}\left( {\frac{(\sqrt x
-1)^{2}}{\sqrt x }} \right)^{2}+\frac{1}{3!}\left( {\frac{(\sqrt x
-1)^{2}}{\sqrt x }} \right)^{3}+...} \right].\notag
\end{align}

This gives us
\begin{equation}
\label{eq51}
f_{\Delta^{1}} (x)=\frac{(x-1)^{2}}{x+1}\exp \left( {\frac{(x-1)^{2}}{\sqrt
x }} \right).
\end{equation}

As a consequence of (\ref{eq51}), we have the following exponential representation of triangular
discrimination
\begin{equation}
\label{eq52}
E_{\Delta^{1}} (a,b)=b\,f_{\Delta^{1}} (a/b)=\frac{(a-b)^{2}}{a+b}\exp
\left( {\frac{(a-b)^{2}}{\sqrt {ab} }} \right),\quad (a,b)\in {\rm
R}_{+}^{2} .
\end{equation}

\subsection{Second Generalization of Triangular Discrimination}

For all $(a,b)\in {\rm R}_{+}^{2} $, let consider the following measures
\begin{equation}
\label{eq53}
\Delta_{t}^{2} (a,b)=\frac{\left( {a-b} \right)^{2(t+1)}}{\left( {a+b}
\right)\left( {ab} \right)^{t}},\quad t=0,1,2,3,...
\end{equation}

In particular, we have
\begin{align}
& \Delta_{0}^{2} =\Delta =\frac{\left( {a-b} \right)^{2}}{a+b}\notag
\intertext{and}
& \Delta_{1}^{2} =2D_{W_{7} W_{1} }^{21} =\Psi -4\Delta =\frac{\left( {a-b}
\right)^{4}}{ab\left( {a+b} \right)}.\notag
\end{align}

The expression (\ref{eq53}) gives us a second generalization of the measure $\Delta
(a,b)$. Let us prove now its convexity. We can write $\Delta_{t}^{2}
(a,b)=b\,f_{\Delta_{t}^{2} } (a/b)$, $t\in {\rm N}$, where
\[
f_{\Delta_{t}^{2} } (x)=\frac{\left( {\sqrt x -1} \right)^{2(t+1)}}{\left(
{x+1} \right)x^{t}}.
\]

The second order derivative of the function $f_{\Delta_{t}^{2} } (x)$ is
given by
\[
{f}''_{\Delta_{t}^{2} } (x)=\frac{\left( {x-1} \right)^{2t}}{\left( {x+1}
\right)^{3}x^{t+2}}\times A_{2} (x,t),
\]
where
\[
A_{2} (x,t)=t\left( {t+1} \right)\left( {x^{4}+1} \right)+2t\left( {2t+3}
\right)x\left( {x^{2}+1} \right)+2\left( {3t^{2}+5t+4} \right)x^{2}.
\]

For all $t\ge 0$, $x>0$, $x\ne 1$, we have ${f}''_{\Delta_{t}^{2} } (x)>0$.
Also we have $f_{\Delta_{t}^{2} } (1)=0$. In view of Lemma 1.1, the measure
$\Delta_{t}^{2} (a,b)$ is convex for all $(a,b)\in {\rm R}_{+}^{2} $, $t\in
{\rm N}$.

\bigskip
Following similar lines of (\ref{eq51}) and (\ref{eq52}), the exponential
representation of the measure $\Delta_{t}^{2} (a,b)$ is given by
\[
E_{\Delta^{2}} (a,b)=\frac{(a-b)^{2}}{a+b}\exp \left(
{\frac{(a-b)^{2}}{ab}} \right),\quad (a,b)\in {\rm R}_{+}^{2} .
\]

\subsection{First Generalization of the Measure $\bf{K(a,b)}$}

For all $(a,b)\in {\rm R}_{+}^{2} $, let consider the following measures
\begin{equation}
\label{eq54}
K_{t}^{1} (a,b)=\frac{\left( {a-b} \right)^{2}\left( {\sqrt a -\sqrt b }
\right)^{2t}}{\left( {\sqrt {ab} } \right)^{t+1}},\quad t=0,1,2,3,...
\end{equation}

In particular, we have
\begin{align}
& K_{0}^{1} =K=\frac{\left( {a-b} \right)^{2}}{\sqrt {ab} },\notag\\
& K_{1}^{1} =2D_{W_{7} W_{6} }^{16} =\Psi -2K=\frac{\left( {a-b}
\right)^{2}\left( {\sqrt a -\sqrt b } \right)^{2}}{ab},\notag\\
& K_{2}^{1} =
V_{8} =2\left( {F+2K-2\Psi } \right)=\frac{\left( {a-b} \right)^{2}\left(
{\sqrt a -\sqrt b } \right)^{4}}{\left( {ab} \right)^{3/2}}\notag
\intertext{and}
& K_{3}^{1} =
V_{12} =L+12\Psi -8K-12F=\frac{\left( {a-b} \right)^{2}\left( {\sqrt a
-\sqrt b } \right)^{6}}{\left( {ab} \right)^{2}}.
\end{align}

The expression (\ref{eq55}) gives first parametric generalization the measure
$K(a,b)$ given by (\ref{eq3}). Let us prove now its convexity. We can write
$K_{t}^{1} (a,b)=b\,f_{K_{t}^{1} } (a/b)$, $t\in {\rm N}$, where
\[
f_{K_{t}^{1} } (x)=\frac{\left( {x-1} \right)^{2}\left( {\sqrt x -1}
\right)^{2t}}{\left( {\sqrt x } \right)^{t+1}}.
\]

The second order derivative of the function $f_{K_{t}^{1} } (x)$ is given by
\[
{f}''_{K_{t}^{1} } (x)=\frac{\left( {\sqrt x -1} \right)^{2t}}{4x^{2}\left(
{\sqrt x } \right)^{t+1}}\times A_{3} (x,t),
\]
where
\[
A_{3} (x,t)=\left( {t+1} \right)\left( {t+3} \right)\left( {x^{2}+1}
\right)+2t\left( {2t+3} \right)\sqrt x \left( {x+1} \right)+2\left(
{3t^{2}+2t+1} \right)x.
\]

\bigskip
For all $t\ge 0$, $x>0$, $x\ne 1$, we have ${f}''_{K_{t}^{1} } (x)>0$. Also
we have $f_{K_{t}^{1} } (1)=0$. In view of Lemma 1.1, the measure $K_{t}^{1}
(a,b)$ is convex for all $(a,b)\in {\rm R}_{+}^{2} $, $t\in {\rm N}$.

\bigskip
Following similar lines of (\ref{eq51}) and (\ref{eq52}), the exponential
representation of the measure $K_{t}^{1} (a,b)$ is given by
\[
E_{K^{1}} (a,b)=\frac{\left( {\sqrt a -\sqrt b } \right)^{2}}{\sqrt {ab}
}\exp \left( {\frac{\left( {a-b} \right)^{2}}{\sqrt {ab} }} \right),\quad
(a,b)\in {\rm R}_{+}^{2} .
\]

\subsection{Second Generalization of the Measure $\bf{K(a,b)}$}

For all $(a,b)\in {\rm R}_{+}^{2} $, let consider the following measures
\begin{equation}
\label{eq55}
K_{t}^{2} (a,b)=\frac{\left( {a-b} \right)^{2(t+1)}}{\left( {\sqrt {ab} }
\right)^{2t+1}},\, t=0,1,2,3,...
\end{equation}

In particular, we have
\begin{align}
& K_{0}^{2} =K=\frac{\left( {a-b} \right)^{2}}{\sqrt {ab} }\notag
\intertext{and}
& K_{1}^{2} =4D_{W_{8} W_{6} }^{23} =F-2K=\frac{\left( {a-b}
\right)^{4}}{\left( {ab} \right)^{3/2}}.\notag
\end{align}

The expression (\ref{eq55}) gives second generalization the measure $K(a,b)$ given
by (\ref{eq3}). Let us prove now its convexity. We can write $K_{t}^{2}
(a,b)=b\,f_{K_{t}^{2} } (a/b)$, $t\in {\rm N}$, where
\[
f_{K_{t}^{2} } (x)=\frac{\left( {x-1} \right)^{2(t+1)}}{\left( {\sqrt x }
\right)^{2t+1}}.
\]

The second order derivative of the function $f_{K_{t}^{2} } (x)$ is given by
\[
{f}''_{K_{t}^{2} } (x)=\frac{\left( {x-1} \right)^{2t}}{4x^{2}\left( {\sqrt
x } \right)^{2t+1}}\times A_{4} (x,t),
\]
where
\[
A_{4} (x,t)=\left( {2t+1} \right)\left[ {2tx^{2}+3x^{2}+2\left( {2t+1}
\right)x+2t+3} \right].
\]

\bigskip
For all $t\ge 0$, $x>0$, $x\ne 1$, we have ${f}''_{K_{t}^{2} } (x)>0$. Also
we have $f_{K_{t}^{2} } (1)=0$. In view of Lemma 1.1, the measure $K_{t}^{2}
(a,b)$ is convex for all $(a,b)\in {\rm R}_{+}^{2} $, $t\in {\rm N}$.

\bigskip
Following similar lines of (\ref{eq51}) and (\ref{eq52}), the exponential representation of the measure
$K_{t}^{2} (a,b)$ is given by
\[
E_{K^{2}} (a,b)=\frac{(a-b)^{2}}{\sqrt {ab} }\exp \left(
{\frac{(a-b)^{2}}{ab}} \right),\quad (a,b)\in {\rm R}_{+}^{2} .
\]

\subsection{Generalization of Hellingar's Discrimination}

For all $(a,b)\in {\rm R}_{+}^{2} $, let consider the following measures
\begin{equation}
\label{eq56}
h_{t} (a,b)=\frac{\left( {\sqrt a -\sqrt b } \right)^{2(t+1)}}{\left( {\sqrt
{ab} } \right)^{t}},\;t\in {\rm N}
\end{equation}

In particular, we have
\begin{align}
& h_{0} =2h=\left( {\sqrt a -\sqrt b } \right)^{2},\notag\\
& h_{1} =D_{W_{6} W_{5} }^{11} =K-8h=\frac{\left( {\sqrt a -\sqrt b }
\right)^{4}}{\sqrt {ab} },\notag\\
& h_{2} =V_{4} =\Psi +32h-6K=\frac{\left( {\sqrt a -\sqrt b }
\right)^{6}}{ab},\notag\\
& h_{3} =U_{2} =2\left( {F+14K-4\Psi -64h} \right)=\frac{\left( {\sqrt a
-\sqrt b } \right)^{8}}{\left( {ab} \right)^{3/2}}\notag
\intertext{and}
& h_{4} =U_{13} =L+44\Psi -120K+512h-20F=\frac{\left( {\sqrt a -\sqrt b }
\right)^{10}}{\left( {ab} \right)^{2}}.\notag
\end{align}

The measure (\ref{eq56}) give generalized Hellingar's discrimination. Let us prove
now its convexity. We can write $h_{t} (a,b)=b\,f_{h_{t} } (a/b)$, $t\in
{\rm N}$, where
\[
f_{h_{t} } (x)=\frac{\left( {\sqrt x -1} \right)^{2(t+1)}}{\left( {\sqrt x }
\right)^{t}}.
\]

The second order derivative of the function $f_{h_{t} } (x)$ is given by
\[
{f}''_{h_{t} } (x)=\frac{\left( {\sqrt x -1} \right)^{2t}}{4\left( {\sqrt x
} \right)^{t+5}}\times A_{5} (x,t),
\]
where
\[
A_{5} (x,t)=t\left( {t+2} \right)\sqrt x \left( {x+1} \right)+2\left(
{t^{2}+t+1} \right)x.
\]

\bigskip
For all $t\ge 0$, $x>0$, $x\ne 1$, we have ${f}''_{h_{t} } (x)>0$. Also
$f_{h_{t} } (1)=0$. In view of Lemma 1.1, the measure $h_{t} (a,b)$ is
convex for all $(a,b)\in {\rm R}_{+}^{2} $, $t\in {\rm N}$.

\bigskip
Following similar lines of (\ref{eq51}) and (\ref{eq52}), the exponential
representation of the measure $h_{t} (a,b)$ is given by
\[
E_{h} (a,b)=\left( {\sqrt a -\sqrt b } \right)^{2}\exp \left( {\frac{\left(
{\sqrt a -\sqrt b } \right)^{2}}{\sqrt {ab} }} \right),\quad (a,b)\in {\rm
R}_{+}^{2} .
\]

\subsection{New Measure}

For all $P,\;Q\in \Gamma_{n} $, let consider the following measures
\begin{equation}
\label{eq57}
M_{t} (a,b)=\frac{\left( {\sqrt a -\sqrt b } \right)^{2(t+2)}}{\left( {a+b}
\right)\left( {\sqrt {ab} } \right)^{t}},\quad t=0,1,2,3,...
\end{equation}

In particular, we have
\begin{align}
& M_{0} =\frac{7}{2}D_{W_{2} W_{1} }^{1} =12D_{CN} -7\Delta =\frac{\left(
{\sqrt a -\sqrt b } \right)^{4}}{\left( {a+b} \right)},\notag\\
& M_{1} =V_{1} =K+26\Delta -48D_{CN} =\frac{\left( {\sqrt a -\sqrt b }
\right)^{6}}{\left( {a+b} \right)\sqrt {ab} },\notag\\
& M_{2} =U_{1} =\Psi +192D_{CN} -100\Delta -8K=\frac{\left( {\sqrt a -\sqrt b
} \right)^{8}}{ab\left( {a+b} \right)},\notag\\
& M_{3} =U_{10} =2\left( {F+22K+4\Delta -5\Psi -128h} \right)=\frac{\left(
{\sqrt a -\sqrt b } \right)^{10}}{\left( {a+b} \right)\left( {ab}
\right)^{3/2}} \notag
\intertext{and}
& M_{4} =U_{15} =\frac{1}{7}\left( {\begin{array}{l}
 7L+448\Delta -1456K+3728\Delta + \\
 +9728h-7680D_{CN} -168F \\
 \end{array}} \right)
=\frac{\left( {\sqrt a -\sqrt b } \right)^{12}}{\left( {a+b} \right)\left(
{ab} \right)^{2}}.\notag
\end{align}

Let us prove now the convexity of the measure (3.17). We can write $M_{t}
(a,b)=b\,f_{M_{t} } (a/b)$, $t\in {\rm N}$, where
\[
f_{M_{t} } (x)=\frac{\left( {\sqrt x -1} \right)^{2(t+2)}}{\left( {x+1}
\right)\left( {\sqrt x } \right)^{t}}.
\]

The second order derivative of the function $f_{M_{t} } (x)$ is given by
\[
{f}''_{M_{t} } (x)=\frac{\left( {x-1} \right)^{2t+2}}{4\left( {x+1}
\right)^{3}\left( {\sqrt x } \right)^{t+5}}\times A_{6} (x,t),
\]
where
\[
A_{6} (x,t)=\left( {\begin{array}{l}
 2\left( {t^{2}+3t+2} \right)x\left( {x^{2}+1} \right)+4\left( {t^{2}+3t+6}
\right)x^{2}+ \\
 +t\left( {t+2} \right)\sqrt x \left( {x^{3}+1} \right)+\left(
{3t^{2}+14t+8} \right)x^{3/2}\left( {x+1} \right) \\
 \end{array}} \right).
\]

\bigskip
For all $t\ge 0$, $x>0$, $x\ne 1$, we have ${f}''_{M_{t} } (x)>0$. Also we
have $f_{M_{t} } (1)=0$. In view of Lemma 1.1, the measure $M_{t} (a,b)$ is
convex for all $(a,b)\in {\rm R}_{+}^{2} $, $t\in {\rm N}$.

\bigskip
Following similar lines of (\ref{eq51}) and (\ref{eq52}), the exponential
representation of the measure $M_{t} (a,b)$ is given by
\[
E_{M} (a,b)=\frac{\left( {a-b} \right)^{4}}{a+b}\exp \left( {\frac{\left(
{\sqrt a -\sqrt b } \right)^{2}}{\sqrt {ab} }} \right),\quad (a,b)\in {\rm
R}_{+}^{2} .
\]

\begin{remark}
\begin{itemize}
\item[(i)] The first ten measures appearing in the second pyramid represents the same measure (\ref{eq11a}) and is the same as $M_{0}$. The last measure given by (\ref{eq49}) is the same as $M_{4}$. The measure (\ref{eq49}) is the only that appears in all the four parts of the last Theorem 2.4. Both these measures generates an interesting measure (\ref{eq57}).
\item[(ii)]  The measure $K_{1}^{1}$ appears in the work of Dragomir et al. \cite{dsb}. An improvement over this work can be seen in Taneja \cite{tan1}.
\item[(iii)] Following similar lines of (\ref{eq51}) and (\ref{eq52}), the
\textit{exponential representation} of the principal measure $L_{t} (a,b)$ appearing in (\ref{eq5}) is given by
\begin{equation}
\label{eq58}
E_{\Delta } (a,b)=\frac{2\left( {a-b} \right)^{2}}{a+b}\exp \left(
{\frac{a+b}{2\sqrt {ab} }} \right),\, (a,b)\in {\rm R}_{+}^{2} .
\end{equation}
We observe that the expression (\ref{eq58}) is little different from the one
obtained above in six parts. Applications of the generating measures (\ref{eq5}),
(\ref{eq49}), (\ref{eq53}), (\ref{eq54}), (\ref{eq55}), (\ref{eq56}) and (\ref{eq57}) along with their exponential representations shall be dealt elsewhere.
\end{itemize}
\end{remark}

\end{document}